\newtheoremstyle{exampstyle}
{0.0em} % Space above
{0.0em} % Space below
{} % Body font
{1em} % Indent amount
{\bfseries} % Theorem head font
{.} % Punctuation after theorem head
{1em} % Space after theorem head
{} % Theorem head spec (can be left empty, meaning `normal')
\theoremstyle{exampstyle}
\begin{document}
	%\oa
	%%%%%%%%%%%%%%%%%%%%%%%%%%%%%%%%%%%%%%%%%%%%%%%%%%%%%%%
	%%% Authors do not modify the information below
	%%% ???????????
	\ArticleType{RESEARCH PAPER}
	%\SpecialTopic{}
	%\luntan
	\Year{2021}
	\Month{}
	\Vol{}
	\No{}
	\DOI{}
	\ArtNo{}
	\ReceiveDate{}
	\ReviseDate{}
	\AcceptDate{}
	\OnlineDate{}
	%%%%%%%%%%%%%%%%%%%%%%%%%%%%%%%%%%%%%%%%%%%%%%%%%%%%%%%
	
	%%% title: 标题
	%%%   \title{title}{title for citation}
	\title{Robustness Quantification of MIMO-PI Controller From the Perspective of \(\gamma\)-Dissipativity}
	
	%%% Corresponding author: 通信作者
	%%%   \author[number]{Full name}{{email@xxx.com}}
	%%% General author: 一般作者
	%%%   \author[number]{Full name}{}
	\author[1]{Zimao Sheng}{{hpShengZimao@163.com}}
	%%% Author information for page head. ????????
	\AuthorMark{Zimao Sheng}
	
	%%% Authors for citation. ??????????
	\AuthorCitation{}
	
	%%% Authors' contribution. ????
	%\contributions{Authors A and B have the same contribution to this work.}
	
	%%% Address. ??
	%%%   \address[number]{Affiliation, City {\rm Postcode}, Country}

	\address[1]{Northwestern Polytechnical University,  Xi'an {\rm 710072}, China}
	
	%%% Abstract. ??
	\abstract{
		The proportional-integral-derivative (PID) controller and its variants are widely used in control engineering, but they often rely on linearization around equilibrium points and empirical parameter tuning, making them ineffective for multi-input-multi-output (MIMO) systems with strong coupling, intense external disturbances, and high nonlinearity. Moreover, existing methods rarely explore the intrinsic stabilization mechanism of PID controllers for disturbed nonlinear systems from the perspective of modern robust control theories such as dissipativity and $\mathcal{L}_2$-gain. To address this gap, this study focuses on $\gamma$-dissipativity (partially equivalent to $\mathcal{L}_2$-gain) and investigates the optimal parameter tuning of MIMO-PI controllers for general disturbed nonlinear MIMO systems. First, by integrating dissipativity theory with the Hamilton-Jacobi-Isaacs (HJI) inequality, sufficient conditions for the MIMO-PI-controlled system to achieve $\gamma$-dissipativity are established, and the degree of $\gamma$-dissipativity in a local region containing the origin is quantified. Second, an optimal parameter tuning strategy is proposed, which reformulates the $\gamma$-dissipativity optimization problem into a class of standard eigenvalue problems (EVPs) and further converts it into linear matrix inequality (LMI) formulations for efficient online computation. Comprehensive simulation experiments validate the effectiveness and optimality of the proposed approach. This work provides a theoretical basis for the robust stabilization of general disturbed nonlinear MIMO systems and enriches the parameter tuning methods of PID controllers from the perspective of dissipativity.
		}
	%% TODO  dataset
	%%% Keywords. ???
	\keywords{MIMO-PI controller, $\gamma$-dissipativity, parameter tuning, disturbed nonlinear MIMO system, eigenvalue problem (EVP)}
	
	\maketitle

	%%%%%%%%%%%%%%%%%%%%%%%%%%%%%%%%%%%%%%%%%%%%%%%%%%%%%%%
	%%% The main text. 正文部分
	%%%%%%%%%%%%%%%%%%%%%%%%%%%%%%%%%%%%%%%%%%%%%%%%%%%%%%%
	\section{INTRODUCTION}
	
	\subsection{Motivation}
	For a long time, the classical proportional-integral-derivative (PID) controller has played an irreplaceable role in the field of control systems engineering\cite{7823045}. Currently, mainstream variant methods of PID controllers tend to linearize and decouple complex nonlinear systems around their equilibrium points for approximation. Based on equilibrium point characteristics, parameter tuning strategies are designed using experimental methods such as Ziegler-Nichols rules\cite{WOS:000709348200001}, frequency-domain internal model control (IMC)\cite{vilanova2012pid}, and the Indirect Design Approach (IDA)\cite{verma2019robust}. However, these methods struggle to stabilize multi-input-multi-output (MIMO) systems with strong coupling between state variables, intense external disturbances, and high nonlinearity. Moreover, most of these methods are empirical and fail to investigate the intrinsic mechanism by which PID controllers can stabilize disturbed nonlinear systems from the perspectives of passivity and the $\mathcal{L}_2$-gain of system disturbance energy in modern robust control theory\cite{byrnes1991passivity,van19922}. For disturbed nonlinear systems, robustly stable systems should typically exhibit dissipativity\cite{9740606}, meaning that the abstract "energy storage functions" of the system should be attenuated to a certain extent under the influence of disturbance noise. Systems with this property can often achieve a certain degree of Lyapunov stability and Input-to-State Stability (ISS)\cite{SONTAG1995351,9740606} under specific conditions. We have further proven that such dissipative characteristics ensure asymptotic stability against bounded-energy disturbances in the $\mathcal{L}_2$ space. In this study, we start from a special type of $\gamma$-dissipativity, which is partially equivalent to the $\mathcal{L}_2$-gain of the system. The key questions then arise: For general disturbed nonlinear systems, what PID controller parameters should be designed to endow the entire system with dissipativity? Furthermore, how to optimize the controller parameters to maximize the system's dissipativity?
	In our previous work\cite{SHENG2025108152}, we designed a MIMO-PI controller for general disturbed nonlinear systems and obtained the optimal parameter tuning strategy by optimizing the exponential rate at which control system errors converge to the global attractor near the origin and the size of the attractor. However, this method did not involve the optimization of system dissipativity near the origin or even along the entire state trajectory. This inspires us to propose a parameter design method for MIMO-PI controllers with optimal $\gamma$-dissipativity, targeting general disturbed nonlinear MIMO systems from the perspective of $\gamma$-dissipativity.

	\subsection{Related work}
	In contrast to classical PID controllers designed for linear single-input single-output (SISO) plants, significant research efforts have been dedicated in recent years to investigating the robust stabilization problem of PID controllers for nonlinear multi-input multi-output (MIMO) systems subject to disturbances. For instance, Zhong et al. \cite{WOS:000708089400001} proposed an active disturbance rejection control (ADRC)-based tuning rule for MIMO non-affine uncertain systems with model uncertainty, to enable robust stabilization and decoupling control of the systems. Xiang et al. \cite{10989590} incorporated adaptive single-parameter tuning to eliminate the requirement for manual trial-and-error tuning in the presence of nonlinearities and uncertainties. Wang et al. \cite{WOS:001310822800146} further developed a deep reinforcement learning (DRL)-based adaptive PID tuning strategy for setpoint tracking in systems with time-varying uncertainty. However, these investigations primarily target nonlinear systems with specific structures, and most primarily address model uncertainty as opposed to externally imposed disturbances. For systems with external disturbances and nonlinearities, existing research \cite{WOS:001240938200001, WOS:001355007600001, WOS:001485121000042} tends to directly linearize the nonlinear systems into linear time-invariant (LTI) models for controller parameter design. This approach aggregates nonlinear terms and external disturbances into a single component, treating them collectively as noise. Nevertheless, the controller parameters obtained via linearization around the equilibrium point often struggle to cope with plants exhibiting significant nonlinear characteristics.
	
	Theoretically, it is fully achievable to accomplish local or even global robust stabilization of nonlinear disturbed MIMO systems through appropriate controller parameter tuning. For example, Zhao et al. \cite{bib:Zhao, bib:Zhao2, zhao2017pid} theoretically established sufficient conditions for guaranteeing effective error stabilization of PID controllers for general nonlinear uncertain MIMO systems. Jinke et al. \cite{bib:Zhang} further proposed an explicit method to construct PID parameters by leveraging the upper bounds of derivatives of unknown nonlinear drift and diffusion terms in stochastic mechanical systems with white noise and linear inputs.
	
	However, the core challenge lies not only in ensuring asymptotic stability or even exponential stabilization of the error dynamics for disturbed systems. Instead, we aim to design a suite of practically viable metrics that can quantify the robustness performance of MIMO-PI controllers when stabilizing the most general nonlinear disturbed systems. On this basis, we can optimize the system robustness by adjusting the controller parameters. It is worth noting that in our previous work \cite{SHENG2025108152}, we have designed optimal robust controllers adopting two robustness metrics: the exponential convergence rate of the disturbed system error and the size of the ultimate invariant set of convergence.
	
	Classical robust controller design methods typically aim to minimize the \(H_\infty\) norm of the disturbed system to mitigate the impact of disturbances. For example, Gopmandal et al. \cite{WOS:000654370700001} enhanced the hybrid search approach based on LTI static output feedback (SOF) from the perspective of \(H_\infty\)-synthesis for MIMO linear time-variant (LTV) systems with norm-bounded time-varying uncertain matrices. Nevertheless, no universally accepted metric exists for quantifying the \(H_\infty\) norm of nonlinear disturbed systems, which has posed a bottleneck for the optimal parameter tuning of MIMO-PI controllers for such systems.

	\subsection{Contributions}
	Analogous to the \(H_\infty\) norm for linear systems, concepts such as \(\gamma\)-dissipativity \cite{moylan2014dissipative, brogliato2007dissipative} and \(\mathcal{L}_2\)-gain \cite{van2000l2} in dissipativity theory enable the characterization of energy dissipation behaviors of disturbed nonlinear dynamical systems in an abstract framework, with more relevant results available in Ref. \cite{9740606}. Building upon our prior work, we present the following contributions from the perspective of \(\gamma\)-dissipativity of disturbed nonlinear dynamical systems:
	\begin{itemize}
		\item For the most general nonlinear MIMO model with external disturbances, we establish sufficient conditions for the MIMO-PI-controlled system to attain \(\gamma\)-dissipativity through the integration of dissipativity theory and the Hamilton-Jacobi-Isaacs (HJI) inequality, and quantify the degree of \(\gamma\)-dissipativity in a local region containing the origin;
		\item From the perspective of optimizing the \(\gamma\)-dissipativity of the controlled system, we develop an optimal parameter tuning strategy for the MIMO-PI controller. This strategy reformulates the \(\gamma\)-dissipativity optimization problem of the disturbed nonlinear MIMO system as a class of standard eigenvalue problems (EVPs), which can be further converted into linear matrix inequality (LMI) formulations for efficient online computation. Comprehensive simulation experiments validate the effectiveness and optimality of the proposed approach. 
	\end{itemize}
	
	\section{PRELIMINARIES}
	In this section, we introduce the concept of \( \gamma \)-dissipativity, which is a key component of modern robust control theory, and briefly describe its essential property. Building on this, we propose a MIMO-PI controller for more generalized multi-input multi-output (MIMO) nonlinear perturbed systems, which serve as the foundation for subsequent developments.
	\subsection{\( \gamma \)-dissipativity}
	In this section, we first introduce the concepts of a special class of \( \gamma \)-dissipativity and \( \mathcal{L}_2 \)-gain, as presented in Definition \ref{pr:prop1} and \ref{pr:prop2}. These concepts are natural extensions of the \( H_{\infty} \) norm for linear perturbed systems in modern control theory. Furthermore, we establish the connection between them as shown in Lemma \ref{lem:lem1}. In Lemma \ref{lem:lem3}, we prove that nonlinear \( \gamma \)-dissipativity systems are asymptotically stable with respect to finite-energy disturbances under certain conditions. Finally, we present a sufficient condition for a nonlinear perturbed system to be \( \gamma \)-dissipativity, known as the HJI inequality (see Lemma \ref{lem:HJI}).
	\newtheorem{prop1}{Definition}
	\begin{prop1}
		\label{pr:prop1}
		(\( \gamma \)-dissipativity) For the nonlinear error system with disturbances $\dot{e}=f(e,\Omega)$, if there exists a smooth, continuously differentiable, and positive semi-definite storage function \( V(e)\geq 0 \) satisfying \( V(0) = 0 \), such that the inequality
		\begin{align}
			\label{eq:gamma_dissipative}
			\dot{V}(e) \leq \frac{1}{2}\left( \gamma^2 \|\Omega\|^2 - \|e\|^2 \right)
		\end{align}
		holds for all admissible disturbance inputs \( \Omega \) (where \( \gamma > 0 \)), then the system is referred to as \( \gamma \)-dissipativity.
	\end{prop1}
	\newtheorem{prop2}[prop1]{Definition}
	\begin{prop2}
		\label{pr:prop2}
		(\( \mathcal{L}_2 \)-gain) For the nonlinear perturbed system $\dot{e}(t) = f(e(t),\Omega(t))$, $t\in [0,T]$, state $e(t)\in \mathbb{R}^n$, $e(t_0)=e_0$, disturbance $\Omega(t) \in \mathbb{R}^m$, if there exists $\gamma >0$ such that
		\begin{align}
			\sqrt{\frac{\int_{0}^{T}||e(\tau)||^2d\tau  }{\int_{0}^{T}||\Omega(\tau)||^2d\tau } }\leq \gamma
		\end{align}
		then the system is said to exhibits the \( \mathcal{L}_2 \)-gain property.
	\end{prop2}
	
	\newtheorem{lem1}{Lemma}
	\begin{lem1}
		\label{lem:lem1} 
		For a given \( \gamma > 0 \), if the disturbed system is \( \gamma \)-dissipativity for zero-initial state $e(0)=0$, then it possesses an \( \mathcal{L}_2 \)-gain less than or equal to \( \gamma \).  
	\end{lem1}
	\begin{proof}
		See Appendix \ref{app:pf_Lemma_1} for the proof.
	\end{proof}
	\newtheorem{lem2}[lem1]{Lemma}
	\begin{lem2}
		\label{lem:lem3}
		(Asymptotic stability of \( \gamma \)-dissipativity system) For the nonlinear system $\dot{e}(t) = f(e(t),\Omega(t))$, $t\in \mathbb{R}^+$, $e(t)\in \mathbb{R}^n$, $e(t_0)=e_0$, $\Omega(t) \in \mathbb{R}^m$, if the following conditions are satisfied: 
		(1) there exists a smooth, continuously, and positive semi-definite storage function \( V(e(t))\geq 0 \) satisfying \( V(0) = 0 \), such that the system is \( \gamma \)-dissipativity;
		(2) there exists $\bar{\lambda} >0$ such that $ V(e(t)) \leq \bar{\lambda} ||e(t)||^2$;
		(3) $\int_0^{\infty}||\Omega(\tau)||^2 d\tau = C_{\Omega}<\infty$.
		Then, $e(t) \rightarrow \left\{
		e:V(e) = 0
		\right\}$, as $t\rightarrow \infty$. Besides, if there exists $\alpha \in \mathcal{K}(.)$ such that $V(e(t)) \geq \alpha\left(||e(t)|| \right)$,then 
		 $e(t) \rightarrow 0$ as $t\rightarrow \infty$.
	\end{lem2}
	\begin{proof}
		See Appendix \ref{app:pf_Lemma_3} for the proof.
	\end{proof}
	
	It is evident that systems with the \( \gamma \)-dissipativity property exhibit strong robustness, which is reflected not only in ensuring a bounded \( \mathcal{L}_2 \)-gain for the disturbed system but also in guaranteeing its asymptotic stability under finite-energy disturbances. Additionally, this robustness can be quantified by the \( \gamma \). Next, a sufficient condition for determining whether the disturbed system is \( \gamma \)-dissipativity is provided.
	
	\newtheorem{lem3}[lem1]{Lemma}
	\begin{lem3}
		\label{lem:HJI}
		(HJI-inequality) The necessary and sufficient condition for system $\dot{e} = f(e) + g(e)\omega$, $z = h(e)$,
		where the state $e\in \mathbb{R}^n$, $f(0) = 0$, the disturbance $\omega\in \mathbb{R}^m$, to be \( \gamma \)-dissipativity is that there exists a smooth and differentiable storage function \( V(e) \) such that the following Hamilton-Jacobi-Issacs(HJI)-inequality 
		\begin{align}
			\frac{\partial V}{\partial e}f(e) + \frac{1}{2\gamma^2}\frac{\partial V}{\partial e}g(e)g^T(e)\left[
			\frac{\partial V}{\partial e}
			\right]^T + \frac{1}{2}h^T(e)h(e) \leq 0
		\end{align}
		admits a semi-positive definite \( V(e) \geq 0 \) with \( V(0) = 0 \).  
	\end{lem3}
	\begin{proof}
		See Ref. \cite{9740606} for the proof.
	\end{proof}
	
	\newtheorem{schur}[lem1]{Lemma}
	\begin{schur}
		\label{thm:schur}
		(Schur Complement Lemma) For a given symmetric matrix $S=\begin{bmatrix}
			S_{11} & S_{12} \\ S_{21} & S_{22}
		\end{bmatrix}$, where $S_{11} \in \mathbb{R}^{r\times r}$. The following three conditions are equivalent. (1) $S<0$; (2) $S_{11}<0,S_{22} - S_{12}^TS_{11}^{-1}S_{12}<0$; (3) $S_{22}<0,S_{11} - S_{12}S_{22}^{-1}S_{12}^T<0$. 
	\end{schur}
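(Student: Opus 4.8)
The plan is to prove both equivalences by exhibiting explicit block congruence transformations, relying on the fact that congruence by a nonsingular matrix preserves negative definiteness (Sylvester's law of inertia). Throughout I would use that symmetry of $S$ forces $S_{21}=S_{12}^{T}$.

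\textbf{Equivalence $(1)\Leftrightarrow(2)$.} The idea is as follows. Whenever $S_{11}$ is invertible, introduce the nonsingular block-triangular matrix $L=\begin{bmatrix} I_r & 0 \\ -S_{12}^{T}S_{11}^{-1} & I \end{bmatrix}$, and verify by a direct block multiplication (using symmetry of $S_{11}$) that
\[
L\,S\,L^{T}=\begin{bmatrix} S_{11} & 0 \\ 0 & S_{22}-S_{12}^{T}S_{11}^{-1}S_{12} \end{bmatrix}.
\]
Since $L$ is nonsingular, $S<0$ if and only if $LSL^{T}<0$, and because the right-hand side is block diagonal this is in turn equivalent to $S_{11}<0$ together with $S_{22}-S_{12}^{T}S_{11}^{-1}S_{12}<0$. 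The one point needing care is that $L$ and the Schur complement are only defined once $S_{11}$ is invertible; this holds in both directions: under $(2)$ it is explicit, and under $(1)$ it follows because $S_{11}$ is a principal submatrix of $S<0$ (test $S<0$ on vectors of the form $[x^{T},0]^{T}$), hence itself negative definite.

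\textbf{Equivalence $(1)\Leftrightarrow(3)$.} Here I would argue by symmetry rather than redo the algebra. Let $P$ be the permutation matrix interchanging the two block coordinates, so that $PSP^{T}=\begin{bmatrix} S_{22} & S_{12}^{T} \\ S_{12} & S_{11} \end{bmatrix}$. Since $P$ is orthogonal, $S<0$ iff $PSP^{T}<0$; applying the argument above to $PSP^{T}$, now with $S_{22}$ in the role of the leading block, yields $S<0$ iff $S_{22}<0$ and $S_{11}-S_{12}S_{22}^{-1}S_{12}^{T}<0$, which is condition $(3)$. Chaining the two equivalences through $(1)$ then also gives $(2)\Leftrightarrow(3)$.

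\textbf{Main obstacle.} I do not expect a genuine difficulty. The only subtlety is the logical ordering — one must establish invertibility of the pivot block \emph{before} the corresponding Schur complement is even well defined — together with performing the block multiplication in the displayed identity correctly; everything else is just the congruence-invariance of definiteness.
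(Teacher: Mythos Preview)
Your argument is correct and is precisely the standard congruence proof of the Schur complement lemma. The paper itself does not supply a proof; it simply writes ``See Ref.~\cite{fuzhen2005schur} for the proof'' and defers to that reference, so there is no in-paper argument to compare against. Your block-triangular congruence $L S L^{T}$, the handling of the invertibility of the pivot block via principal-submatrix negativity, and the permutation trick for $(1)\Leftrightarrow(3)$ are exactly the textbook route (and almost certainly what the cited reference contains), so nothing further is needed.
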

	\begin{proof}
		See Ref. \cite{fuzhen2005schur} for the proof.
	\end{proof}
	
	\subsection{Eigenvalues of Hurwitz matrix}
	In the following Lemma \ref{lem:lemma_appendix2}, we present a method to determine the existence of a common characteristic matrix that satisfies the Lyapunov inequality for all countably many Hurwitz matrices. Notably, this lemma plays a crucial role in verifying the existence of the common matrix $P$ involved in Theorem \ref{lem:lem4}.
	\newtheorem{lem4}[lem1]{Lemma}
	\begin{lem4}
		\label{lem:lemma_appendix2}
		For $A_i\in \mathbb{R}^{n\times n}$, $i=1,2,\ldots N$, $L_{A,i} = \sup_{1\leq j\leq N}\left\| A_j - A_i\right\|$, if $Re\left[\lambda\left( A_i\right)\right]\leq -\varepsilon$, and
		$\varepsilon > \min_{1\leq i\leq N}\left\{
		L_{A,i}M^2\left(A_i\right)
		\right\}$ where $\left\|e^{A_i^T t}  \right\|\leq M\left(A_i\right)e^{-\varepsilon t}$, then there exists $P=P^T>0$ such that $A_i P + P A_i^T + \varepsilon^* I \leq O$ for any $A_i,\varepsilon^*>0$ holds. 
	\end{lem4}
%	\begin{proof}
%		See Appendix \ref{app:pf_Lemma_6} for the proof.
%	\end{proof}
\begin{proof}
	\label{app:pf_Lemma_6}
	We can construct such a 
	\begin{align}
		P=\int_0^{\infty}e^{A_0^T t}Qe^{A_0 t}dt	
	\end{align}
	where
	\begin{align}
		A_0 = \text{argmin}_{i=1,2\ldots N}\left\{
		\left[\sup_{j}\left\|A_j-A_i^T\right\| \right]M^2\left(A_i\right)
		\right\}
	\end{align}
	It's obvious that $P=P^T>O$, let $F(t)=e^{A_0^T t}Qe^{A_0 t}$, $Q=Q^T>O$, its differential can be described as
	\begin{align}
		\frac{d F(t)}{dt} = A_0^Te^{A_0^T t}Qe^{A_0 t} + e^{A_0^T t}Qe^{A_0 t}A_0
	\end{align}
	Meanwhile, define $S_i(t) = A_ie^{A_0^T t}Qe^{A_0 t} + e^{A_0^T t}Qe^{A_0 t}A_i^T$, and
	\begin{align}
			T_i(t) &= S_i(t) - \frac{dF(t)}{dt} \\
			&=\underbrace{\left(A_i-A_0^T \right)}_{K_i}e^{A_0^T t}Qe^{A_0 t} + e^{A_0^T t}Qe^{A_0 t}\underbrace{\left(A_i^T-A_0\right)}_{K_i^T}
	\end{align}
	\begin{align}
		\begin{split}
			A_i P + PA_i^T  
			&= \int_0^{\infty} A_iF(t) + F(t)A_i^Tdt\\
			&= \int_0^{\infty} S_i(t)dt = \int_0^{\infty} T_i(t) + \frac{dF(t)}{dt} dt\\
			&=F(\infty) - F(0) + \int_{0}^{\infty} T_i(t)dt\\
			&=-Q + \int_{0}^{\infty} T_i(t)dt
		\end{split}
	\end{align}
	Moreover, $\forall i$, $\left\|K_i\right\| =\left\|A_i-A_0^T\right\| \leq \sup_{i}\left\|A_i-A_0^T\right\|$, and $\varepsilon > \sup_i \left\|A_i-A_0^T \right\| M^2(A_0) \geq \left\|K_i \right\| M^2(A_0)$, hence,
	\begin{align}
		\begin{split}
			\int_{0}^{\infty} \left\|T_i(t)\right\|dt 
			&\leq \int_0^{\infty} 2\left\| K_i\right\| \left\|e^{A_0^T t}Qe^{A_0 t}\right\| dt\\
			&\leq 2\left\|K_i\right\|M^2\left(A_0\right)\left\| Q\right\|\int_0^{\infty} e^{-2\varepsilon t}dt \\
			& = \frac{\left\|K_i\right\|M^2\left(A_0\right)\left\| Q\right\|}{\varepsilon} < \left\| Q\right\|
		\end{split}
	\end{align}
	Due to the fact that $\varepsilon > \left\|K_i\right\|M^2\left(A_0\right)$, thereby for any $\varepsilon^*>0$, there exists $\alpha >0$ such that
	\begin{align}
		\frac{\left\|K_i\right\|M^2\left(A_0\right)}{\varepsilon} < 1-\frac{\varepsilon^*}{\alpha}
	\end{align}
	For any $x\in\mathbb{R}^n$, let $Q=\alpha I$,
	\begin{align}
		\begin{split}
			x^T\left(A_i P + PA_i^T \right)x & = x^T 
			\left(
			-\alpha I  + \int_0^{\infty}T_i(t) dt  
			\right)x \\
			& \leq x^T\left(-\alpha + \int_0^{\infty} \left\|T_i(t) \right\|   dt \right)x \\
			&\leq x^T\left(-\alpha + \left(1-\frac{\varepsilon^*}{\alpha}\right)\left\| Q\right\|   dt \right)x\\
			&\leq -\varepsilon^* x^Tx
		\end{split} 
	\end{align}
	Hence, $A_i P + PA_i^T + \varepsilon^* I \leq O$ holds for any $A_i$ holds.
\end{proof}
	\newtheorem{cor}{Remark}
	\begin{cor}
		For Lemma \ref{lem:lemma_appendix2}, the inequality $\left\|e^{A^T t}\right\| \leq M(A)e^{-\varepsilon t}$ holds, where $\varepsilon = -\lambda_{\max}\big[Re(A)\big]>0$, $M(A) = \sup_t\left\|e^{(A^T+\varepsilon I)t} \right\|$. By performing the Jordan decomposition on $A^T$, we have $A^T=\mathcal{P}\mathcal{J}\mathcal{P}^{-1}$, where $\mathcal{P}$ is an invertible matrix and $\mathcal{J}$ is a Jordan matrix. Then $M(A)$ can be computed as follows:
		 \begin{align}
		 	\begin{split}
		 		M(A) &= \sup_{t>0}\left\|\mathcal{P} e^{(\mathcal{J}+\varepsilon I)t}\mathcal{P}^{-1} \right\|
		 		 \leq 
		 		 \underbrace{\left\| \mathcal{P}\right\| 
		 		 	\left\| \mathcal{P}^{-1}\right\|
		 		 	\sup_{t>0}\left\|e^{(\mathcal{J}+\varepsilon I)t}  \right\| }_{\tilde{M}(A)}
		 		 ,\quad \mathcal{J}+\varepsilon I\leq O 
		 	\end{split}
		 	\end{align}
		 	We can adopt \( \tilde{M}(A) \) as an estimation of \( M(A) \). Particularly, if the Jordan canonical form \( \mathcal{J} \) of \( A^T \) is a purely diagonal matrix (i.e., \( A^T \) is diagonalizable), then we have
		 	$
		 	\sup_{t>0}\left\|e^{(\mathcal{J}+\varepsilon I)t}  \right\|=1.
		 	$
		 	In this case, the constant \( M(A) \) in the inequality simplifies to the product of the norm of the similarity transformation matrix and its inverse:
		 	$
		 	M(A)=\left\| \mathcal{P}\right\| \cdot \left\| \mathcal{P}^{-1}\right\|
		 	$.
	\end{cor}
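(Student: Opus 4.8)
\medskip
\noindent\emph{Proof proposal.}
The plan is to first establish the exponential bound $\|e^{A^Tt}\|\le M(A)e^{-\varepsilon t}$, then transfer it to the Jordan form, and finally read off the stated simplifications. For the bound, observe that $\varepsilon I$ is a scalar multiple of the identity, so it commutes with $A^T$; hence $e^{A^Tt}=e^{(A^T+\varepsilon I-\varepsilon I)t}=e^{-\varepsilon t}\,e^{(A^T+\varepsilon I)t}$. Taking the induced operator norm and using that $e^{-\varepsilon t}$ is a nonnegative scalar gives $\|e^{A^Tt}\|=e^{-\varepsilon t}\,\|e^{(A^T+\varepsilon I)t}\|\le e^{-\varepsilon t}\sup_{s\ge0}\|e^{(A^T+\varepsilon I)s}\|=M(A)e^{-\varepsilon t}$, which is exactly the asserted inequality (modulo the finiteness of $M(A)$, addressed below).

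For the Jordan step, substitute $A^T=\mathcal{P}\mathcal{J}\mathcal{P}^{-1}$. Since conjugation by a fixed invertible matrix commutes with the matrix exponential, and since $\varepsilon I$ is invariant under this conjugation, $e^{(A^T+\varepsilon I)t}=\mathcal{P}\,e^{(\mathcal{J}+\varepsilon I)t}\,\mathcal{P}^{-1}$. Submultiplicativity of the induced norm then yields $\|e^{(A^T+\varepsilon I)t}\|\le\|\mathcal{P}\|\,\|e^{(\mathcal{J}+\varepsilon I)t}\|\,\|\mathcal{P}^{-1}\|$, and taking $\sup_{t>0}$ gives $M(A)\le\tilde M(A)$, the claimed estimate. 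When $\mathcal{J}$ is diagonal, say $\mathcal{J}=\mathrm{diag}(\lambda_1,\dots,\lambda_n)$, the matrix $e^{(\mathcal{J}+\varepsilon I)t}=\mathrm{diag}\big(e^{(\lambda_i+\varepsilon)t}\big)$ has spectral norm $\max_i e^{(\mathrm{Re}\lambda_i+\varepsilon)t}$; since $\varepsilon=-\lambda_{\max}[\mathrm{Re}(A)]$ forces $\mathrm{Re}\lambda_i+\varepsilon\le0$ for every $i$, this quantity is bounded by $1$ for all $t\ge0$ and equals $1$ at $t=0$, so $\sup_{t>0}\|e^{(\mathcal{J}+\varepsilon I)t}\|=1$ and $\tilde M(A)=\|\mathcal{P}\|\cdot\|\mathcal{P}^{-1}\|$.

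The one point requiring care — and the place I expect a reader to push back — is the finiteness of $M(A)=\sup_{t\ge0}\|e^{(A^T+\varepsilon I)t}\|$. The eigenvalues of $A^T+\varepsilon I$ are $\lambda_i+\varepsilon$ with $\mathrm{Re}(\lambda_i+\varepsilon)\le0$, but a Jordan block attached to a \emph{dominant} eigenvalue (one with $\mathrm{Re}\lambda_i=-\varepsilon$, so that $\lambda_i+\varepsilon$ sits on the imaginary axis) contributes entries of the form $t^{\ell}e^{(\lambda_i+\varepsilon)t}/\ell!$, which are unbounded on $t\ge0$ as soon as that block has size $\ge2$; in that case the supremum is $+\infty$ and the estimate degenerates. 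I would resolve this by either (i) working under the hypothesis that the dominant eigenvalues are semisimple — automatic in the diagonalizable case the Remark ultimately highlights — so that the oscillatory factors stay bounded and $\sup_{t>0}\|e^{(\mathcal{J}+\varepsilon I)t}\|$ is a finite, explicitly computable block maximum; or (ii) replacing $\varepsilon$ by any $\varepsilon'\in(0,\varepsilon)$, which absorbs the polynomial growth into a strictly decaying exponential and restores a finite $M$ at the cost of a marginally slower guaranteed decay. Apart from this, the argument is entirely routine: commutativity of scalar matrices, similarity-invariance of the matrix exponential, and submultiplicativity of the operator norm.
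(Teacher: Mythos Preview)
Your proposal is correct and follows essentially the same route as the paper: the Remark in the paper is stated without a separate proof, and the justification embedded in its text is precisely what you spell out --- commuting off the scalar $\varepsilon I$, passing to the Jordan form via $e^{(A^T+\varepsilon I)t}=\mathcal{P}e^{(\mathcal{J}+\varepsilon I)t}\mathcal{P}^{-1}$, and applying submultiplicativity. Your treatment of the finiteness of $M(A)$ when the dominant eigenvalue carries a nontrivial Jordan block is more careful than the paper, which silently assumes this supremum is finite; your observation that the final identity $M(A)=\|\mathcal{P}\|\cdot\|\mathcal{P}^{-1}\|$ is really the value of the upper bound $\tilde M(A)$ rather than of $M(A)$ itself is also accurate.
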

	
	\subsection{MIMO-PI Controller}
	In this section, we analyze a general perturbed nonlinear system and propose a MIMO-PI controller\cite{SHENG2025108152}. Consider the following perturbed autonomous MIMO general nonlinear system within continuous and first-order differentiable $f \in \mathbb{R}^n$, and first-order differentiable disturbance $\omega(t) \in \mathbb{R}^l$,
	\begin{align}
		\label{eq:perturbed non-affine nonlinear system}
		\dot{e}(t) = f(e(t),u(t)) + \Gamma \omega(t), \ f(0,0)=0
	\end{align}
	where state $e(t)\in \mathbb{R}^n$, $e(t_0)=e_0$, control input $u(t) \in \mathbb{R}^{m}$, and $\Gamma\in\mathbb{R}^{n\times l}$. Further, we assume the adoption of a MIMO-PI controller that takes into account the coupling of multiple input channels. Explicitly, 
	\begin{equation}
		\label{eq:controller}
		u(t)=u(e(t)) = K_P e(t) + K_I \int_0^t e(t) dt
	\end{equation}
	where $K_P \in \mathbb{R}^{m \times n}, K_I \in \mathbb{R}^{m \times n}$. Here, the control commands and its first-order derivative quantity are constrained as
	\begin{equation}
		\label{equ:commands_constrain}
		u_{\min}\leq u(t) \leq u_{\max},  \dot{u}_{\min} \leq \dot{u}(t) \leq \dot{u}_{\max}
	\end{equation}
	where $u_{\min},u_{\max},\dot{u}_{\min},\dot{u}_{\max},\in \mathbb{R}^{m}$. We expect to exponentially stablize the Eq.(\ref{eq:perturbed non-affine nonlinear system}) within minimum \( \gamma \)-dissipativity to resist the sudden disturbances. 
	
	\section{MAIN RESULTS}
	
	\subsection{The \( \gamma \)-dissipativity of MIMO-PI Controller}
	As the core of this paper, Theorem \ref{lem:lem4} not only reveals the rationale behind the \( \gamma \)-dissipativity of general nonlinear perturbed systems under the MIMO-PI controller, but also specifically presents a sufficient condition for the error trajectory to possess \( \gamma \)-dissipativity over a certain time interval. Meanwhile, in Corollary \ref{cor:cor9}, we further provide a sufficient condition for the existence of the characteristic matrix in Theorem \ref{lem:lem4}, more precisely from the perspective of the eigenvalues of the Jacobian of the error trajectory. In particular, this result indicates that the entire trajectory is more likely to exhibit higher \( \gamma \)-dissipativity when the Jacobian matrix has negative eigenvalues that are further from the imaginary axis.
	\newtheorem{thm1}{Theorem}
	\begin{thm1}
		\label{lem:lem4} 
		For the perturbed autonomous system Eq.(\ref{eq:perturbed non-affine nonlinear system}), the MIMO-PI controller Eq.(\ref{eq:controller}) is adopted as the input. If there exists $P=P^T>O$ and $\varepsilon > 1$ for any trajectory $e=e(t)$, $t\geq t_0$ such that
		\begin{align}
			\label{eq:MIMO_PI_condition}
			PA_K(e) + A_K^{T}(e)P + \varepsilon I \leq O
		\end{align}
		Then the system exhibits the \( \gamma^* \)-dissipativity property, here:
		\begin{align}
			\gamma^* = \inf_{\gamma}\left\{
			\gamma: (1-\varepsilon) I  + \frac{1}{\gamma^2}PGG^TP\leq O, \forall e
			\right\}
		\end{align}
		where $K=\left[K_P,K_I\right]$, $A_K(e) = D_1(e) + D_2(e)K$,
		\begin{align}
			D_1(e) = \begin{bmatrix}
				\frac{\partial f}{\partial e} & O\\
				I & O
			\end{bmatrix},
			D_2(e) = \begin{bmatrix}
				\frac{\partial f}{\partial u} \\ O
			\end{bmatrix},
			G = \begin{bmatrix}
				\Gamma \\ O
			\end{bmatrix}
		\end{align}
		Moreover, for finite-energy disturbance $\omega$ that satisfy $\int_{0}^{\infty} \left\|\omega(\tau) \right\|^2 d\tau < \infty$, the error $e(t),\dot{e}(t) \rightarrow 0$ as $t\rightarrow \infty$.
	\end{thm1}
	\begin{proof}
		Substituting the differential form of the MIMO-PI controller $\dot{u}(t) = K_P \dot{e}(t) + K_I e(t)$ into the velocity form of perturbed autonomous system 
		\begin{align}
			\ddot{e}(t) = \frac{\partial f}{\partial e} \dot{e}(t) + \frac{\partial f}{\partial u} \dot{u}(t) + \Gamma \dot{\omega}(t) 
		\end{align}
		to obtain the state-space model of $s(t) = \left[\dot{x}(t)^T,x(t)^T\right]^T$ as
		\begin{align}
			\underbrace{
				\begin{bmatrix}
					\ddot{e}(t) \\ \dot{e}(t)
			\end{bmatrix}}_{\dot{s}(t)}
			=
			\underbrace{
				\begin{bmatrix}
					\frac{\partial f}{\partial e} + \frac{\partial f}{\partial u} K_P & \frac{\partial f}{\partial u} K_I\\ I & O
				\end{bmatrix}
			}_{A_K(e)} 
			\underbrace{
				\begin{bmatrix}
					\dot{e}(t) \\ e(t)
				\end{bmatrix}
			}_{s(t)}
			+
			\underbrace{\begin{bmatrix}
					\Gamma \\ O
			\end{bmatrix}}_{G}\dot{\omega}(t)
		\end{align} 
		here $A_K(e)$ can be decomposed as,
		\begin{align}
			A_K(e) = \underbrace{\begin{bmatrix}
					\frac{\partial f}{\partial e} & O \\
					I & O
			\end{bmatrix}}_{D_1(e)} + \underbrace{
				\begin{bmatrix}
					\frac{\partial f}{\partial u} \\
					O
				\end{bmatrix}
			}_{D_2(e)}\underbrace{\begin{bmatrix}
					K_P & K_I
			\end{bmatrix}}_{K}
		\end{align}
		Due to the fact that, when $\varepsilon >1$:
		\begin{align}
			\label{eq:gamma_star}
			\lim_{\gamma \rightarrow \infty}\left\{
			(1-\varepsilon) I  + \frac{1}{\gamma^2}PGG^TP
			\right\}= (1-\varepsilon) I < O
		\end{align}
		Hence, there exists $\gamma^*$ as shown in Eq.(\ref{eq:gamma_star}) such that
		\begin{align}
			(1-\varepsilon) I  + \frac{1}{\gamma^{*2}}PGG^TP\leq O
		\end{align}
		We construct the positive define Lyapunov function $V(s)=\frac{1}{2}s^TPs$, and its derivative can be expressed as:
		\begin{align}
			\frac{\partial V}{\partial s} = s^TP,\ V(0) = 0
		\end{align}
		Hence,
		\begin{align}
			\begin{split}
				\frac{\partial V}{\partial s}A_K(e)s + \frac{1}{2\gamma^{*2}}\frac{\partial V}{\partial s}GG^T\left[\frac{\partial V}{\partial s} \right]^T &+ \frac{1}{2}s^Ts 
				 = \frac{1}{2}s^T
				\left[
				PA_K(e) + A_K(e)^TP + \frac{1}{\gamma^{*2}}PGG^TP + I
				\right]s	\\
				& = \frac{1}{2}s^T
				\bigg[
				PA_K(e) + A_K^T(e)P + \varepsilon I\bigg]s +\frac{1}{2}s^T
				\bigg[ (1-\varepsilon)I 
				 + \frac{1}{\gamma^{*2}}PGG^TP
				\bigg]s\\
				&\leq \frac{1}{2}s^T
				\bigg[
				PA_K(e) + A_K^T(e)P + \varepsilon I\bigg]s \leq 0
			\end{split}
		\end{align}
		According to Lemma \ref{lem:HJI}, HJI-inequality holds, the system exhibits $\gamma^{*}$-dissipativity. Moreover, according to the Lemma \ref{lem:lem3}, the proof is completed.
	\end{proof}
	
%	\newtheorem{cor1}{Remark}
%	\begin{cor1}
%		\label{cor:cor7}
%		In Theorem \ref{lem:lem4}, larger \(\varepsilon\) implies that \(\gamma^*\) can be smaller, thereby translating to stronger robustness.
%	\end{cor1}
	\newtheorem{cor1}[cor]{Remark}
	\begin{cor1}
		\label{cor:cor9}
		In Theorem \ref{lem:lem4}, to guarantee the existence of a common positive definite \( P \) specified in Eq.(\ref{eq:MIMO_PI_condition}) for all \( e= e(t)\in \Omega \) with \( t \geq t_0 \), we start from Lemma \ref{lem:lemma_appendix2} and derive a sufficient condition for the obtaining of $K$ from the eigenvalue perspective as follows:
		\begin{align}
			\ Re\left[\lambda\left(A_K(e) \right)\right]\leq -\varepsilon,\quad \forall e \in \Omega 
		\end{align}
		where,
		\begin{align}
			\varepsilon > \inf_{e\in\Omega}\left\{
			\sup_{\tilde{e}\in\Omega}\left\| A_K(\tilde{e}) - A_K(e)\right\|
			M^2_K\left( e \right)
			\right\}
		\end{align}
		and $\left\| \exp\left(A_K(e)t \right)\right\|\leq M_K\left(e \right)\exp\left( - \varepsilon t\right)$. Furthermore, since $e(t)\to0$ as $t\to\infty$, it is evident that $0\in\Omega$. Thus, let
		\begin{align}
			S_K(\Omega) =
			\sup_{\tilde{e}\in \Omega}\left\|A_K(\tilde{e}) -A_K(0)\right\|
			 M^2_K\left(0\right)\geq  \inf_{e\in\Omega}\left\{
			 \sup_{\tilde{e}\in\Omega}\left\| A_K(\tilde{e}) - A_K(e)\right\|
			 M^2_K\left( e \right)
			 \right\}
		\end{align}
		we can determine the $\gamma_K^*(\Omega)$-dissipativity of the nonlinear perturbed system by examining the negativity of the index $\mathcal{L}_K(\Omega)$:
		\begin{align}
			\mathcal{L}_K(\Omega) = \sup_{e\in\Omega}\lambda_{\max}\left[Re\left( A_K(e) \right)\right] + S_K(\Omega)< 0
		\end{align}
	\end{cor1}
	Under the above constraint, the dissipativity $\gamma^*_K(\Omega)$ for parameter $K$ can be described as:
	\begin{align}
		\label{eq:gamma_star_K}
		\gamma^*_K(\Omega) = \sup_{\varepsilon_K>1}\inf_{\gamma}\left\{
		\gamma: (1-\varepsilon_K) I  + \frac{1}{\gamma^2}P_K(\varepsilon_K)GG^TP_K(\varepsilon_K)\leq O
		\right\}
	\end{align}
	According to the Proof of Lemma \ref{lem:lemma_appendix2} in Appendix \ref{app:pf_Lemma_6}, $P_K(\varepsilon_K)$ takes the following form: 
	\begin{align}
		\label{eq:P_K_varepsilon_K}
		P_K(\varepsilon_K) = \varepsilon_K\left(1-\frac{S_K(\Omega)}{\mathcal{L}_K(\Omega)}\right) \int_0^{\infty}e^{A_K(0)^T t}e^{A_K(0) t} dt
	\end{align}
	Combining Eq.(\ref{eq:gamma_star_K}) and Eq.(\ref{eq:P_K_varepsilon_K}), we can further derive the specific form of $\gamma_K^*(\Omega)$ as follows:
	\begin{align}
		\label{eq:gamma_K_final}
		\gamma_K^*(\Omega) = \sup_{\varepsilon_K>1}\inf_{\gamma}\left\{
		\gamma:\frac{1-\varepsilon_K}{\varepsilon_K^2\left( 1-\frac{S_K(\Omega)}{\mathcal{L}_K(\Omega)}\right)^2} + \frac{1}{\gamma^2}\tilde{P}_KGG^T\tilde{P}_K\leq O
		\right\}
	\end{align}	
	where $\tilde{P}_K=\int_0^{\infty}e^{A_K(0)^T t}e^{A_K(0) t} dt$, i.e., the solution of Lyapunov function
	 $A_K(0)^T\tilde{P}_K + \tilde{P}_KA_K(0) + I=O$.
	 
	 The above results demonstrate that the MIMO-PI controller with \(K\) as the adjustable parameter exhibits the property of \(\gamma_K^*(\Omega)\)-dissipativity in region \(\Omega\) (encompassing the neighborhood of the origin) when the constraint \(\mathcal{L}_K(\Omega)<0\) is satisfied. Under such a constraint, the dissipativity level of the system can be quantified by \(\gamma_K^*(\Omega)\).
	 Besides, the above conclusions establish a criterion for determining whether the MIMO-PI controller is \(\gamma\)-dissipativity over the given region \(\Omega\). Furthermore, over this \(\gamma\)-dissipativity region, we quantify the degree of dissipativity.
	
	\subsection{Parameter Tuning within Optimal $\gamma$-dissipativity}
	In this section, provided that the MIMO-PI controller renders the nonlinear perturbed system $\gamma$-dissipativity, we investigate the optimization problem of the parameter \(K^*\) from the viewpoint of disturbance energy. Specifically, we aim to design the controller such that the \(\mathcal{L}_2\)-gain over the region \(\Omega\) is minimized as:
	\begin{align}
		K^*=\text{arg}\min_{K}\gamma_K^*(\Omega),\quad s.t.\ \mathcal{L}_K(\Omega) <0,\ A_K(0)^T\tilde{P}_K + \tilde{P}_KA_K(0) + I=O
	\end{align}
	The core difficulty in solving the aforementioned optimization problem lies in determining the \(\mathcal{L}_2\)-gain \(\gamma_K^*(\Omega)\) over the fixed domain \(\Omega\). Furthermore, we can transform the aforementioned \(\gamma_K^*(\Omega)\) into a class of canonical Eigenvalue Value Problem (EVP). Specifically, we aim to select an appropriate gain \(K\) such that there exists a scalar \(\varepsilon_K>1\), so as to attain a minimized \(\gamma\). This optimization problem can be rewrited as follows:
	\begin{align}
		\min_K \gamma, \quad s.t.\ \left(1-\frac{S_K(\Omega)}{\mathcal{L}_K(\Omega)}\right)^2\tilde{P}_K G G^T \tilde{P}_K - \frac{(\varepsilon_K-1)\gamma^2}{\varepsilon_K^2} I \leq O,\ \mathcal{L}_K(\Omega)<0
	\end{align}
	Given that
	\begin{align}
		\max_{\varepsilon_K>1}\frac{(\varepsilon_K-1)}{\varepsilon_K^2}= \frac{1}{4}
	\end{align}
	 we further transform the aforementioned problem into the following Linear Matrix Inequality (LMI) form by virtue of the Schur Complement Lemma:
	 \begin{align}
	 		\label{eq:EVPs}
	 		\gamma_K^*(\Omega)=\min_{K,\mathcal{L}_K(\Omega)<0} \gamma>0,\quad
	 		s.t.\ \begin{bmatrix}
	 		O & 2\left(1-\frac{S_K(\Omega)}{\mathcal{L}_K(\Omega)}\right)\tilde{P}_K G \\
	 		2\left(1-\frac{S_K(\Omega)}{\mathcal{L}_K(\Omega)}\right) G^T \tilde{P}_K & O 
	 	\end{bmatrix}\leq \gamma I
	 \end{align}
	 where $\tilde{P}_K$ satisfys that $K^TD_2(0)^T\tilde{P}_K + \tilde{P}_KD_2(0)K + D_1(0)^T\tilde{P}_K + \tilde{P}_KD_1(0) +I = O$.
	 We aim to realize optimal \(\gamma\)-dissipativity-based parameter tuning over the specified domain \(\Omega\). This objective is attainable if there exists a parameter \(K\) within \(\Omega\) satisfying \(\mathcal{L}_K(\Omega) < 0\). Once this condition is met, one can conduct an exhaustive search over all feasible parameters \(K\) within the domain to retrieve the optimal parameter \(K^* = \arg\max_{K}\gamma_K^*(\Omega)\) of MIMO-PI controller.
	 
\section{SIMULATION}
In this section, we will verify the $\gamma$-dissipativity theory of the designed MIMO-PI controller through a specific control example of a perturbed nonlinear MIMO model. Furthermore, based on this theory, we will carry out the optimal design of controller parameters to verify the rationality of our optimal parameter tuning strategy.
\subsection{Experimental configuration}
 For simplicity, we adopt the perturbed nonlinear controlled model from our prior work \cite{SHENG2025108152} as the benchmark. This model is a simplified kinematic model for designing robust path-following guidance laws of fixed-wing UAVs \cite{bib:Samir} in the ground coordinate system along the $\gamma$ and $\chi$ directions, expressed as follows:
 \begin{align}
 	\label{eq:aircraft guidance law}
 	\begin{split}
 		&\dot{\chi}(t) = \frac{g\tan \phi(t)}{V} + d_{\chi} \\
 		&\dot{\gamma}(t) = \frac{g[n_z(t)\cos \phi(t) - \cos \gamma (t)]}{V} + d_{\gamma}\\
 	\end{split}
 \end{align}
 Herein, the flight states of the kinematic model, $x(t)=\left[\gamma(t),\chi(t) \right]^T$, represent the flight path angle and course angle of the UAV at time $t$, respectively. The control input is defined as $u(t)=[\phi(t),n_z(t)]^T$, a vector comprising the roll angle $\phi(t)$ and the normal overload $n_z(t)$ along the z-axis direction. The term $d = [d_{\chi}, d_{\gamma}]^T$ denotes the disturbance imposed on $\dot{\chi}(t)$ and $\dot{\gamma}(t)$, induced by factors such as wind fields and model simplification. Specifically, the perturbation $d$ takes the form of sinusoidal noise signals: $ d_{\chi} = L_{d_{\chi}}\sin(\omega_{\chi}t) $ and $ d_{\gamma} = L_{d_{\gamma}}\cos(\omega_{\gamma}t) $.
 The aircraft velocity $V$ is assumed to be constant, and $g$ denotes the gravitational acceleration with a value of $\mathrm{9.81m/s^2}$.
 We define the reference tracking signals as $x_c = \left[\chi_c,\gamma_c\right]^T$, with the tracking error $e(t) = \left[e_{\chi}(t),e_{\gamma}(t)\right]^T$ given by $ e(t) = x_c - x(t)= \left[\chi_c - \chi(t), \gamma_c - \gamma(t) \right]^T $.
 Under this error definition, the aforementioned perturbed system can be transformed into:
\begin{align}
	\label{eq:aircraft_error_model}
	\underbrace{\begin{bmatrix}
			\dot{e}_{\chi}(t) \\ \dot{e}_{\gamma}(t)
	\end{bmatrix}}_{\dot{e}(t)}
	=
	\underbrace{	\begin{bmatrix}
			-\frac{g\tan\phi(t)}{V}\\
			-\frac{g(n_z(t)\cos \phi(t) - \cos(\gamma_c - e_{\gamma}(t)))}{V}
	\end{bmatrix}}_{f_e(e(t),u(t))}
	+ 
	\underbrace{\begin{bmatrix}
			-d_{\chi}\\ - d_{\gamma}
	\end{bmatrix}}_{d_e}
\end{align}
 The Jacobians of $f_e(e,u)$ with respect to $e$ and $u$ are derived as follows:
 \begin{align}
 	\frac{\partial f_e(e,u)}{\partial e}
 	=
 	\begin{bmatrix}
 		0 & 0\\
 		0 & \frac{g}{V}\sin(\gamma_c - e_{\gamma})
 	\end{bmatrix}, 
 	\frac{\partial f_e(e,u)}{\partial u} =\begin{bmatrix}
 		-\frac{g}{V}\sec^2\phi &  0\\
 		\frac{gn_z}{V}\sin\phi & -\frac{g}{V}\cos \phi
 	\end{bmatrix}
 \end{align}
 The relevant model parameters are listed in Table \ref{tab:hyperparameters}.
 \begin{table}[hbtp]
 	\caption{Hyperparameter declarations}
 	\label{tab:hyperparameters}
 	\centering
 	\footnotesize
 	\begin{tabular}{llll|llll}
 		\hline
 		\centering
 		\textbf{Declaration} & \textbf{Param} & \textbf{Value} & \textbf{Unit}
 		&
 		\textbf{Declaration} & \textbf{Param} & \textbf{Value} & \textbf{Unit}
 		\\ 
 		\hline
 		Simulation timespan & $T$ & [0,20] & $\text s$
 		&
 		Lipschitz constant of $d_{\chi}$ & $L_{d_{\chi}}$ & 0.1 & $-$ \\
 		Acceleration of gravity & $g$ & 9.81 & $\text m/\text s^2$ 
 		&
 		Lipschitz constant of $d_{\gamma}$ & $L_{d_{\gamma}}$ & 0.1 & $-$ \\
 		Initial climb angle & $\gamma(0)$ & $\pi/4$& $\text{rad} $
 		&
 		Disturbance frequency of $d_{\chi}$ & $\omega_{\chi}$ & 0.15 & $\text{rad}/ \text s$\\
 		Initial azimuth angle & $\chi(0)$ & $\pi/3$ & $\text{rad} $& Disturbance frequency of $d_{\gamma}$ & $\omega_{\gamma}$ & 0.15 & $\text{rad}/ \text s$\\
 		Initial roll angle & $\phi(0)$ & $\pi/3$ & $\text{rad}$
 		&The range of $\phi$ & [$\phi_{\min}$,$\phi_{\max}$]& [-$\pi/4$,$\pi/4$] & $\text{rad} $\\
 		Initial overload & $n_z(0)$ & $1$& $-$&
 		The range of $\dot{\phi}$ & [$\dot{\phi}_{\min}$,$\dot{\phi}_{\max}$]& [-$\pi/6$,$\pi/6$] & $\text{rad} /\text{s}$\\
 		Reference climb angle & $\gamma_c$ & $\pi/12$ & $\text{rad} $ & 
 		The range of $n_z$ & [$n_{z,\min}$,$n_{z,\max}$]& [-2.1,2.1] & $-$\\
 		Reference azimuth angle & $\chi_c$ & 0& $\text{rad}$&
 		The range of $\dot{n}_z$ & [$\dot n_{z,\min}$,$\dot n_{z,\max}$]& [-1,1] & $/s$\\
 		Reference roll angle & $\phi_c$ & 0& $\text{rad}$&Consolidated velocity & $V$ & 25 & $\text m/ \text s$ 
 		\\
 		Reference overload & $n_{zc}$ & 0& $-$\\
 		\hline 
 	\end{tabular}
 \end{table}
 
 \subsection{Verification of $\gamma$-dissipativity domain for the MIMO-PI Controller}  
 To stabilize the aforementioned perturbed system, we design a MIMO-PI controller as shown in Eq.(\ref{eq:controller}). The optimal controller parameters $K = [K_P^*, K_I^*]$, derived in Ref. \cite{SHENG2025108152}, are given as follows:
 \begin{align}
 	\label{eq:K2}
 		K_P^* = \begin{bmatrix}
 			1.6968  & 0.5906 \\
 			-0.5906  &  1.9556
 		\end{bmatrix}, 
 		K_I^* = \begin{bmatrix}
 			3.4869  & 0.1784 \\
 			-0.1784  &  3.4869
 		\end{bmatrix}
 \end{align}
First, unlike $\mathcal{L}_{K}(\Omega)$ defined on region $\Omega$ containing the origin, we redefine the dissipativity index $\mathcal{L}_{K}(e)$ at an arbitrary point $e$ as:
\begin{align}
	\mathcal{L}_{K}(e) = \lambda_{\max}\left[\mathrm{Re}(A_K(e))\right] + \left\|A_K(e) -A_K(0)\right\|M^2_K(0)
\end{align}
It is evident that $\sup_{e\in \Omega}\mathcal{L}_{K}(e) \leq \mathcal{L}_{K}(\Omega)$, i.e., $\mathcal{L}_{K}(e)$ at any point in region $\Omega$ serves as a lower bound of $\mathcal{L}_{K}(\Omega)$ and directly impacts the specific value of $\mathcal{L}_{K}(\Omega)$. Thus, we can quantify the point-wise dissipativity degree via $\mathcal{L}_{K}(e)$. 

To this end, we select discretized grid points from $-\pi/3$ to $\pi/3$ with a step size of 0.05 along the $e_{\chi}$-axis and from $-\pi/6$ to $\pi/6$ with a step size of 0.01 along the $e_{\gamma}$-axis. $\mathcal{L}_{K}(e)$ values at each point in this rectangular grid region are calculated to evaluate the point-wise dissipativity index, with results shown in Figure \ref{Fig:2}. 
 \begin{figure}[htbp]
	\centering
	\includegraphics[width=0.8\textwidth]{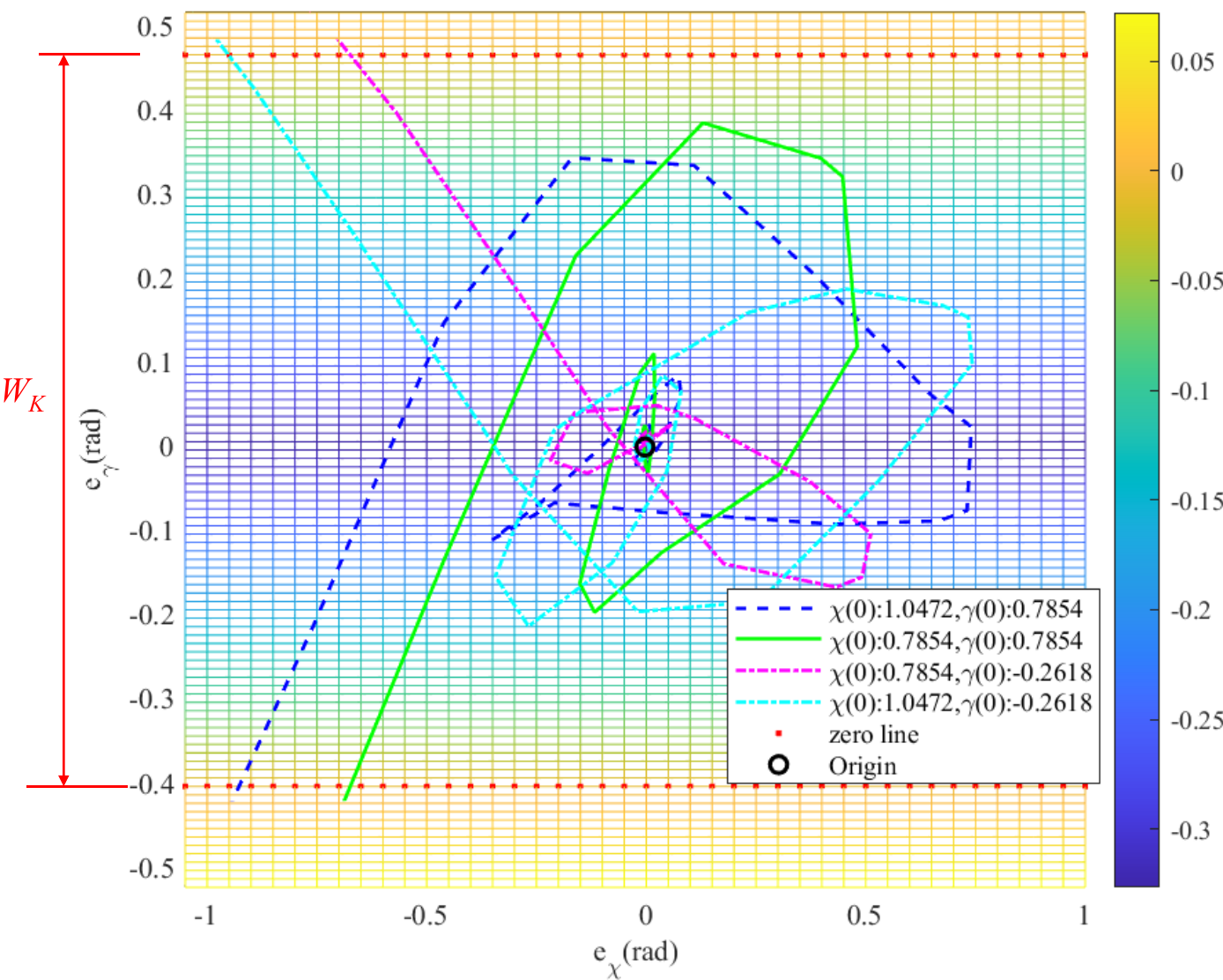}
	\caption{
		Two-dimensional phase portraits of error trajectories under different initial conditions $\chi(0)$ and $\gamma(0)$, which illustrate the error origin, $\mathcal{L}_{K^*}(e)$ at every point $e$, and the corresponding boundary of $\mathcal{L}_{K^*}(\Omega)=0$ (zero line).}
	\label{Fig:2}
\end{figure}
Notably, in the heatmap, $\mathcal{L}_{K}(e)$ exhibits uniformity along the horizontal direction of the $e_{\chi}$-axis, indicating that the error $e_{\chi}$ has no effect on system dissipativity. By contrast, $\mathcal{L}_{K}(e)$ is high at both ends and low in the middle along the horizontal direction of the $e_{\gamma}$-axis, and its distribution is symmetric about the line $e_{\gamma}=0$, which implies that a larger absolute error $|e_{\gamma}|$ degrades system dissipativity and thus undermines robust stabilization. At the boundary where $\mathcal{L}_{K}(e)=0$ (termed the zero line), $\mathcal{L}_{K}(e)>0$ in regions outside this boundary, making it difficult to determine whether such regions possess $\gamma$-dissipativity. 

Leveraging the heatmap characteristics, we use the width $W_K$ between the two zero lines along the $e_{\gamma}$-axis (referred to as the dissipativity domain) to characterize the measure of the dissipative region. In fact, a larger $W_K$ indicates wider regions where the MIMO-PI controller can achieve error $\gamma$-dissipativity under parameter $K$, enabling robust stabilization; conversely, a smaller $W_K$ corresponds to narrower dissipative regions, which is unfavorable for robust stabilization. Further, as shown in Figure \ref{Fig:2}, we design four groups of initial error conditions $e(0)=\left[\chi(0),\gamma(0)\right]^T$. It can be observed that regardless of the initial error, as long as the error trajectory remains within the $\gamma$-dissipativity domain, $e(t)$ will eventually converge to the vicinity of the origin. Moreover, in regions with weak $\gamma$-dissipativity (i.e., smaller $\left\|\mathcal{L}_{K}(e)\right\|$), state trajectories are relatively smooth, implying large fluctuations in the error convergence process; in regions with strong $\gamma$-dissipativity (i.e., larger $\left\|\mathcal{L}_{K}(e)\right\|$), trajectories are tortuous, which means the error convergence process is confined tightly near the origin and thus achieves stable convergence, corresponding to enhanced dissipativity and robustness. 

The above conclusions demonstrate the rationality of using indices $\mathcal{L}_{K}(e)$ and $\mathcal{L}_{K}(\Omega)$ to quantify the $\gamma$-dissipativity degree of the nonlinear perturbed system under the MIMO-PI controller.

 \subsection{Verification of $\gamma$-dissipativity index $\gamma_K(\Omega)$}
 Currently, the widely adopted metric for evaluating the error stabilization performance of the controller is the average Integral Time Absolute Error (ITAE). It is defined as the time integral of the absolute tracking error over a specified time interval from $t=0$ to $T$: 
 \begin{align}
 	\text{ITAE} = \frac{1}{T}\int_{0}^{T}\left\|e(t)\right\| dt
 \end{align}
 We intend to use this metric to demonstrate the robustness quantification capability of the proposed index $\gamma_K(\Omega)$. To validate the rationality of our indictor $\gamma_K(\Omega)$, we derive new controller coefficients \( K \) by introducing incremental perturbations \( \Delta K \) to the baseline optimal coefficients \( K^* \). 
 \begin{align}
 	K = K^* + \Delta K = K^*-\varepsilon \left[I_p,I_i\right]
 \end{align}
 For simplicity, $\Delta K$ can be described in the form $\Delta K = -\varepsilon \left[I_p,I_i\right]$, where $I_p$ and $I_i$ both are identity matrixes, and $\epsilon \in \mathbb{R}^1$ is served as a regulation variable to determine $\Delta K$. We then compare the corresponding response curves and their associated ITAE across multiple $\varepsilon$, as shown in Table \ref{tab:varepsilon}.  
 
 \begin{table}[htbp]
 	\caption{The different rectangular region $\Omega$.}
 	\label{tab:region}
 	\centering
 	\begin{tabular}{llll}
 		\hline
 		Region & Range of $e_{\chi}$ & Range of $e_{\gamma}$ & Color \\
 		\hline
 		$\Omega_1$ & $\left[-0.7,0.7\right]$ & $\left[-0.3,0.3\right]$ & $\text{Magenta}$\\
 		$\Omega_2$ & $\left[-0.5,0.5\right]$ & $\left[-0.22,0.22\right]$ & $\text{Blue}$\\
 		$\Omega_3$ & $\left[-0.25,0.25\right]$ & $\left[-0.15,0.15\right]$ & $\text{Green}$\\
 		$\Omega_4$ & $\left[-0.1,0.1\right]$ & $\left[-0.06,0.06\right]$ & $\text{Red}$\\
 		\hline 
 	\end{tabular}
 \end{table}
 To achieve an objective, gradient-based visualization of the index $\gamma_K(\Omega)$ across different regions $\Omega$, we define four regions $\Omega_1, \Omega_2, \Omega_3, \Omega_4$ enclosed by rectangular boundaries for each parameter $K$, with their boundary lines and colors detailed in Table \ref{tab:region}. Specifically, we calculate the index $\gamma_K(\Omega_i)$ for these four boundary regions, as presented in Table \ref{tab:varepsilon}; their exact ranges are also visualized in Figure \ref{Fig:1}. Notably, both the zero line width $W_K$ and the index $\gamma_K(\Omega_i)$ for each region vary with different controller parameters $K$. A general trend is observed that $\gamma_K(\Omega_i)$ decreases for regions $\Omega_i$ closer to the origin. Since $\gamma_K(\Omega_i)$ theoretically represents the $\mathcal{L}_2$-gain of the error, regions with darker blue in the heatmaps exhibit smaller $\mathcal{L}_2$-gain, indicating stronger robustness against error disturbance energy. However, the parameter $K$ selected by minimizing $\gamma_K(\Omega_i)$ is not necessarily optimal, as a smaller $\gamma_K(\Omega_i)$ may lead to a narrower $W_K$ and thus reduce the scope of $\gamma$-dissipativity. For instance, compared with subfigure (b), subfigure (a) shows a smaller global $\gamma$-dissipativity index but a correspondingly narrower $W_K$.
 
 \begin{table*}[htbp]
 	\caption{The different $\varepsilon$ and corresponding $R_K$, $W_K$, $\gamma_K(\Omega_1)$, $\gamma_K(\Omega_2)$, $\gamma_K(\Omega_3)$, $\gamma_K(\Omega_4)$.}
 	\label{tab:varepsilon}
 	\centering
 	\resizebox{1.0\textwidth}{!}{
 	\begin{tabular}{llllllll|llllllll}
 		\hline
 		$K$ & $\varepsilon$ & $R_K$ & $W_K$ & $\gamma_K(\Omega_1)$ & $\gamma_K(\Omega_2)$ & $\gamma_K(\Omega_3)$ & $\gamma_K(\Omega_4)$
 		 & 
 		$K$ & $\varepsilon$ & $R_K$ & $W_K$ & $\gamma_K(\Omega_1)$ & $\gamma_K(\Omega_2)$ & $\gamma_K(\Omega_3)$ & $\gamma_K(\Omega_4)$\\ 
 		\hline % 横线
 		$K_1$ & -4 & 0.7742 & 0.8 &11.16 & 5.97 &  4.23 & 3.06
 		& 
 		$K_4$ & 0.5 & 0.3626& 0.87 &15.98& 9.31 & 6.82 & 5.07\\
 		$K_2$ & -2 & 0.6754& 0.96 &8.46& 5.72 & 4.42 & 3.41
 		&
 		$K_5$ & 0.8 & 0.2849& 0.69 & 75.92 & 16.92 & 10.05 & 6.68 \\
 		$K_3$ & -1 & 0.5913 & 0.98 & 9.65 & 6.52 & 5.03 & 3.89
 		&
 		$K_6$ & 1 & 0.2192& 0.54 & 30.80& 61.61& 17.03 & 9.04\\    
 		\hline 
 	\end{tabular}}
 \end{table*}
 
 \begin{figure*}[htbp]
 	\centering
 	\begin{tabular}{ccc}
 		\includegraphics[width=0.32\textwidth]{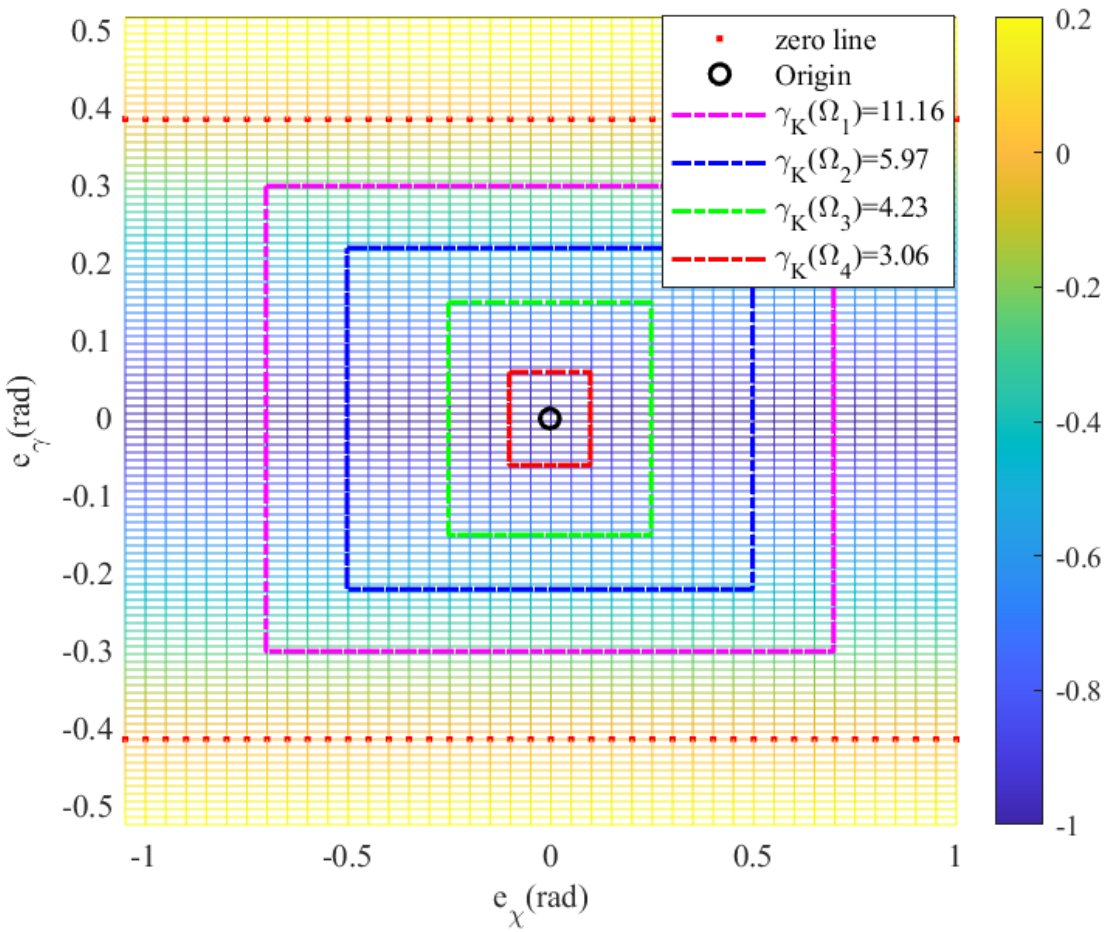} &
 		\includegraphics[width=0.32\textwidth]{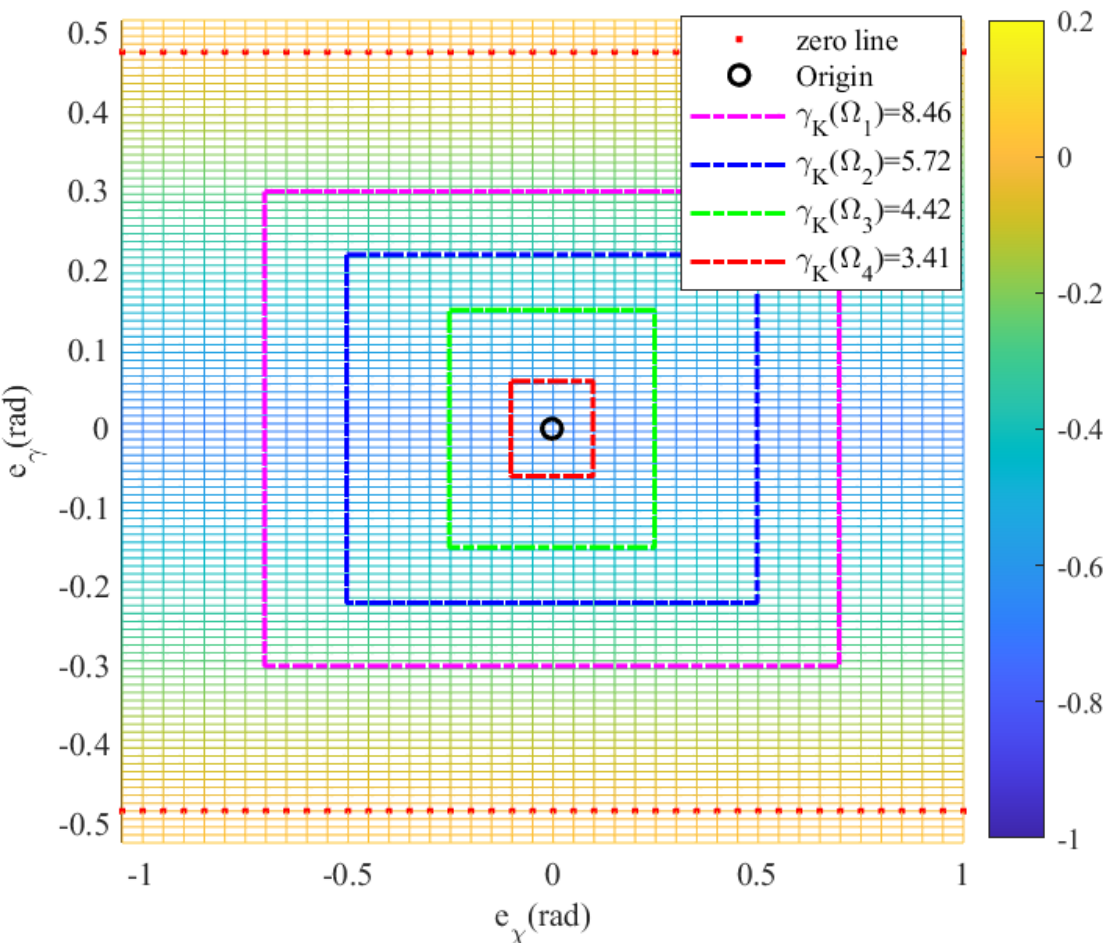} &
 		\includegraphics[width=0.32\textwidth]{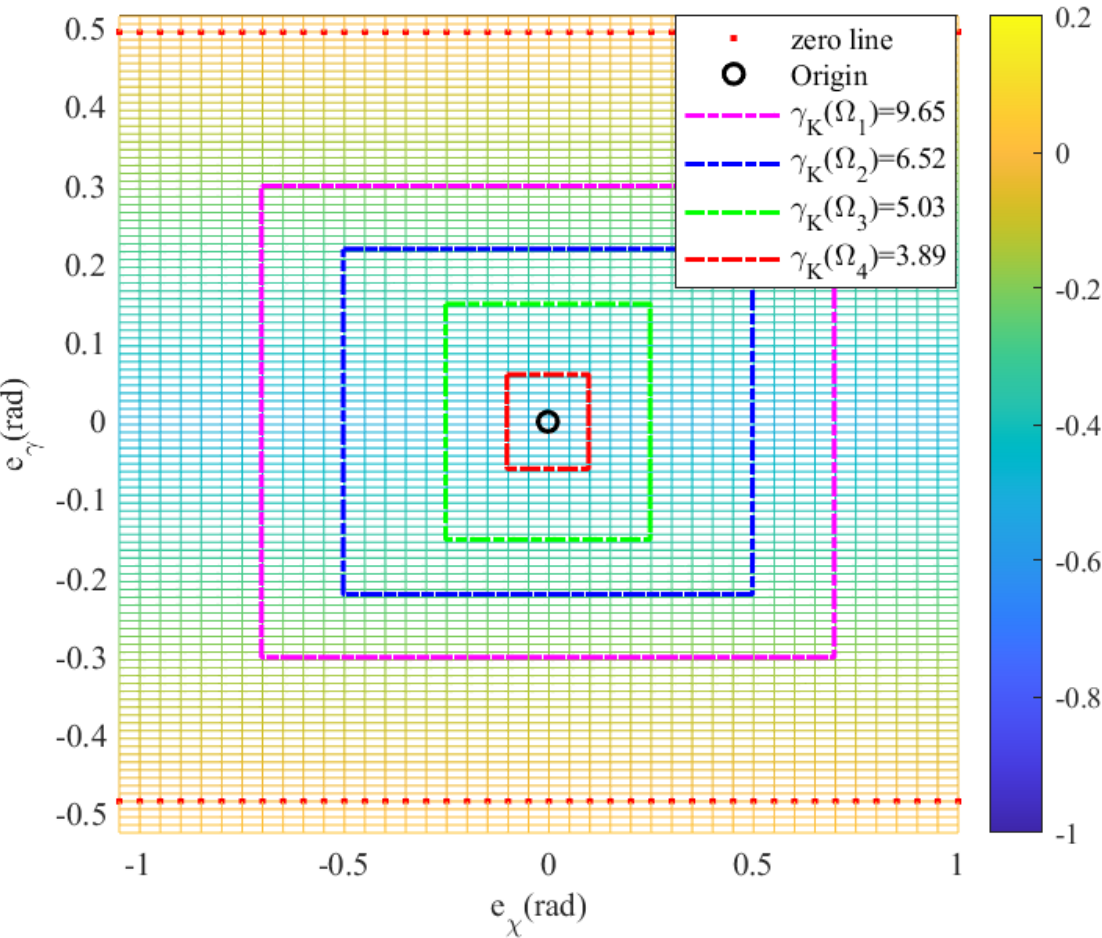}\\
 		(a)  & (b) & (c)  \\
 		\includegraphics[width=0.32\textwidth]{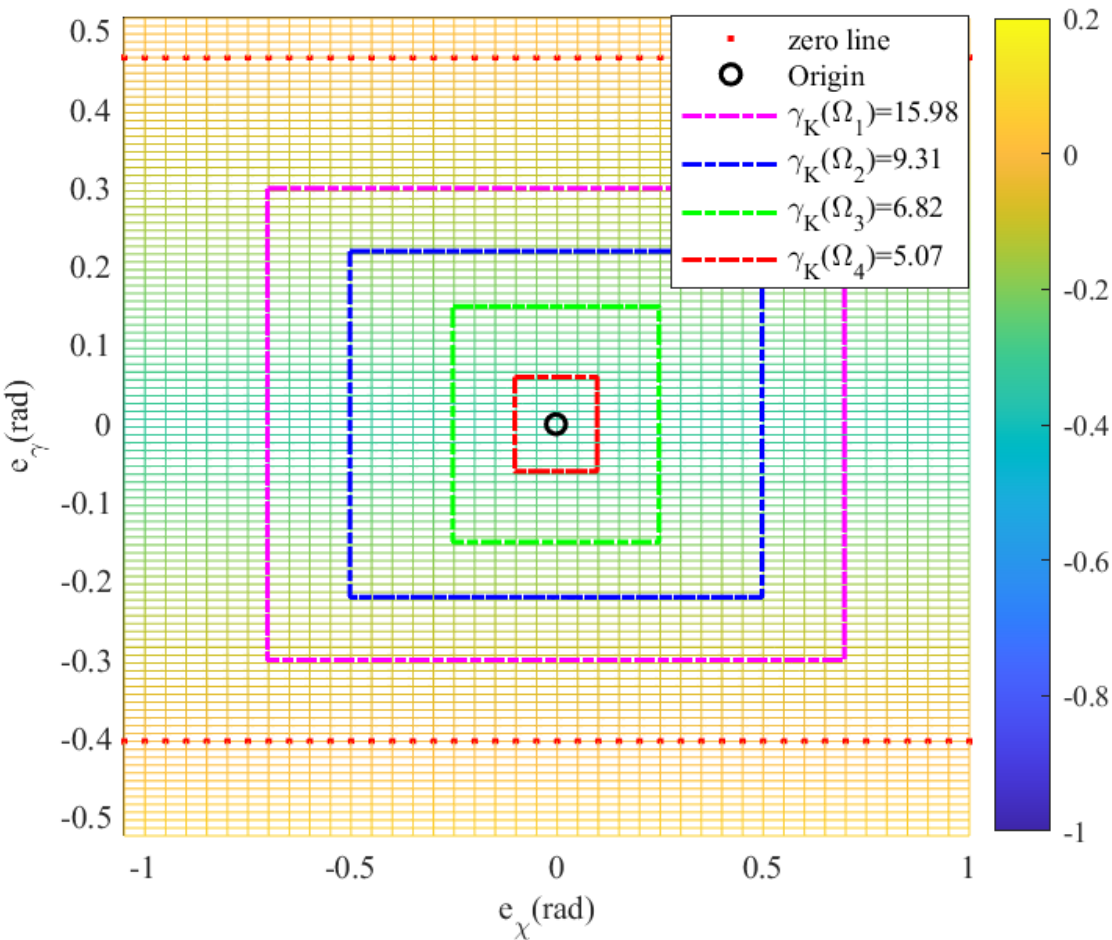} &
 		\includegraphics[width=0.32\textwidth]{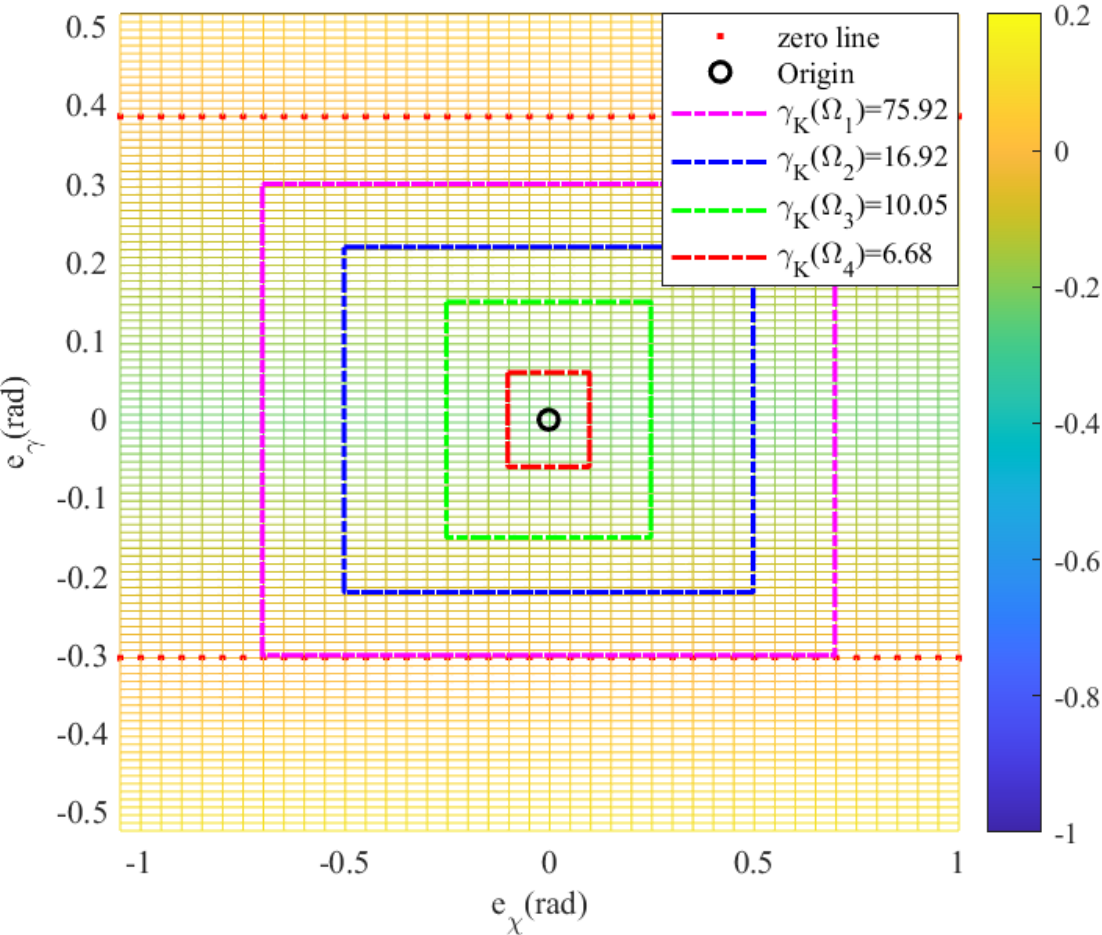} &
 		\includegraphics[width=0.32\textwidth]{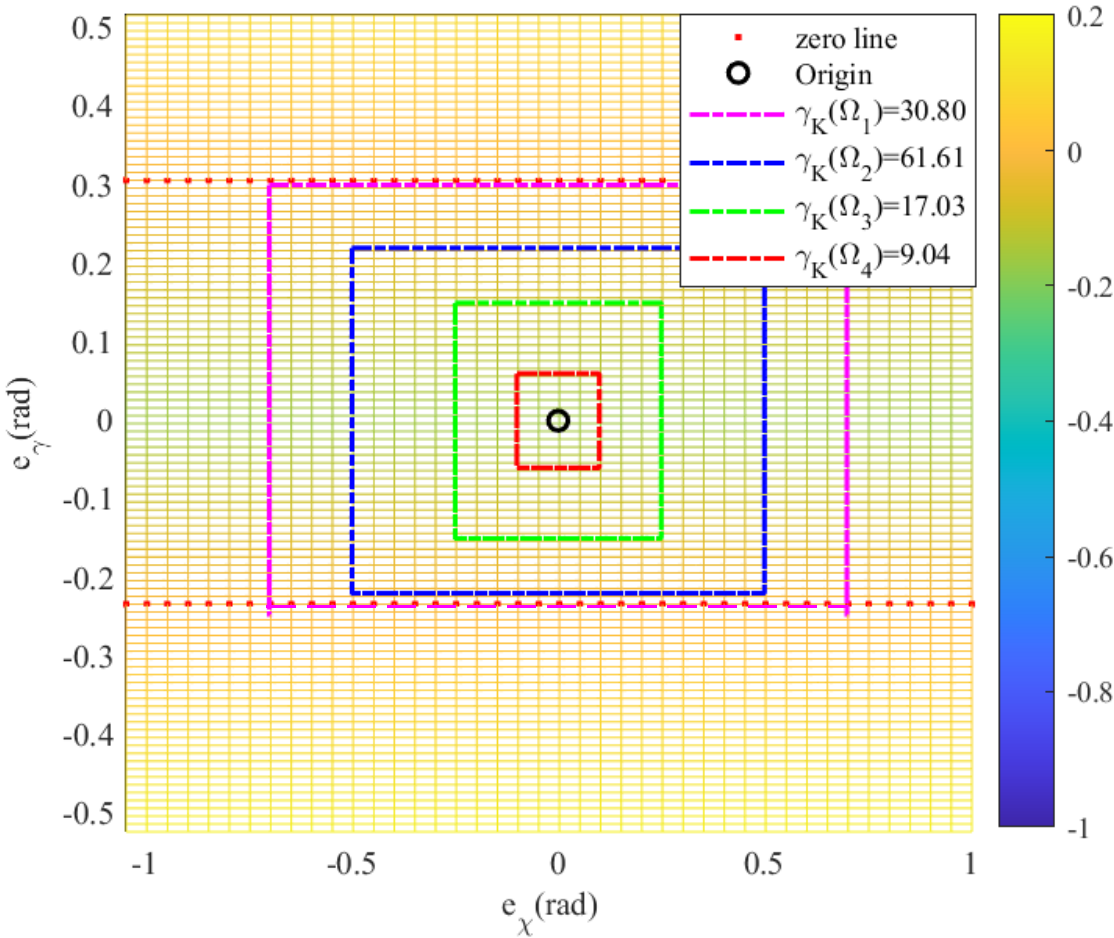}\\
 		(d)  & (e) & (f)  \\
 	\end{tabular}
 	\caption{Dissipativity indices $\gamma_K(\Omega_i)$ over regions $\Omega_i,i=1,2,3,4$ for different controller parameters $K$.
 		(a): For the $K_1$; 
 		(b): For the $K_2$;
 		(c): For the $K_3$;
 		(d): For the $K_4$;
 		(e): For the $K_5$;
 	    (f): For the $K_6$.}
 	\label{Fig:1}
 \end{figure*}
  Figure \ref{Fig:3} provides a more intuitive illustration of the time-domain profiles of the absolute values of the error $e(t)$ and its derivative $\dot{e}(t)$ for different parameters $K$ corresponding to distinct $W_K$ and $\gamma_K(\Omega_i)$. In both subfigures, a smaller $\gamma_K(\Omega_i)$ tends to be associated with weaker oscillations and a smoother error stabilization process. For example, the blue curve, corresponding to the minimum $\gamma_K(\Omega_i)$, shows the flattest profile, whereas the brown-red dashed curve, with the maximum $\gamma_K(\Omega_i)$, exhibits the most severe fluctuations. From an energy perspective, minimizing $\gamma_K(\Omega_i)$ essentially maximizes the attenuation of disturbance energy, thereby reducing oscillation induced by perturbations and enhancing the system's dissipative performance. To further quantify the impact of $W_K$ on error convergence, Figure \ref{Fig:3} also presents the relationships between $W_K$ and the ITAE as well as standard deviation indices of $e(t)$ and $\dot{e}(t)$ for different $K$. A strong negative correlation is clearly observed between $W_K$ and these two indices, implying that a wider $\gamma$-dissipativity domain generally leads to a more stable error stabilization process. Moreover, the standard deviation analysis reveals that $W_K$ has an exponential impact on the stability of the system's convergence process.
  \begin{figure*}[htbp]
 	\centering
 	\begin{tabular}{cc}
 		\includegraphics[width=0.5\textwidth]{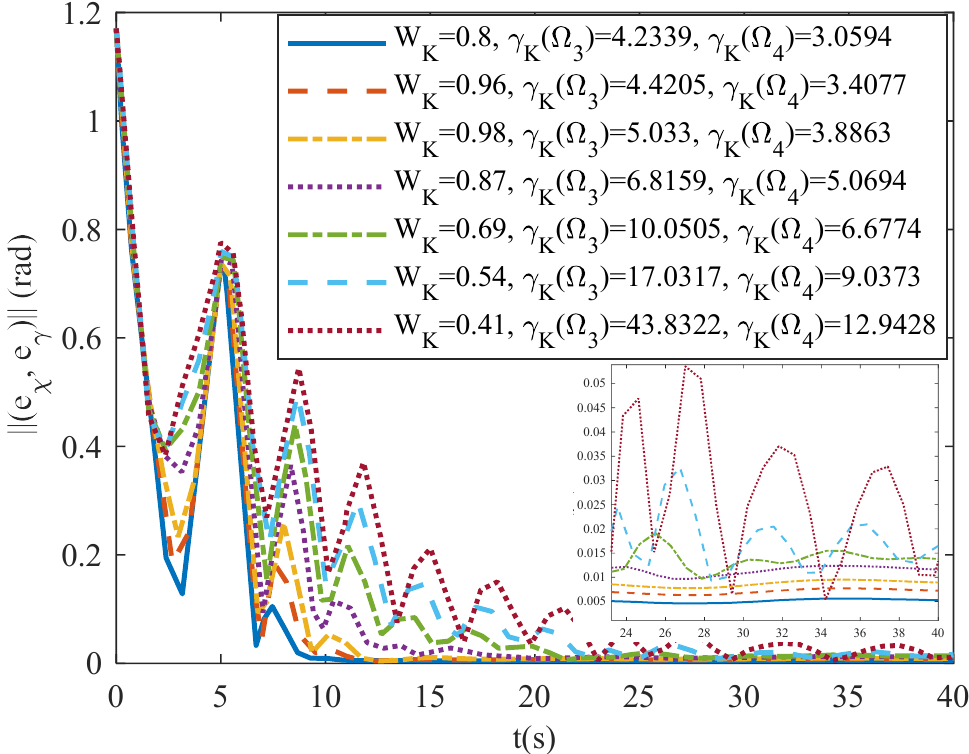} &
 		\includegraphics[width=0.5\textwidth]{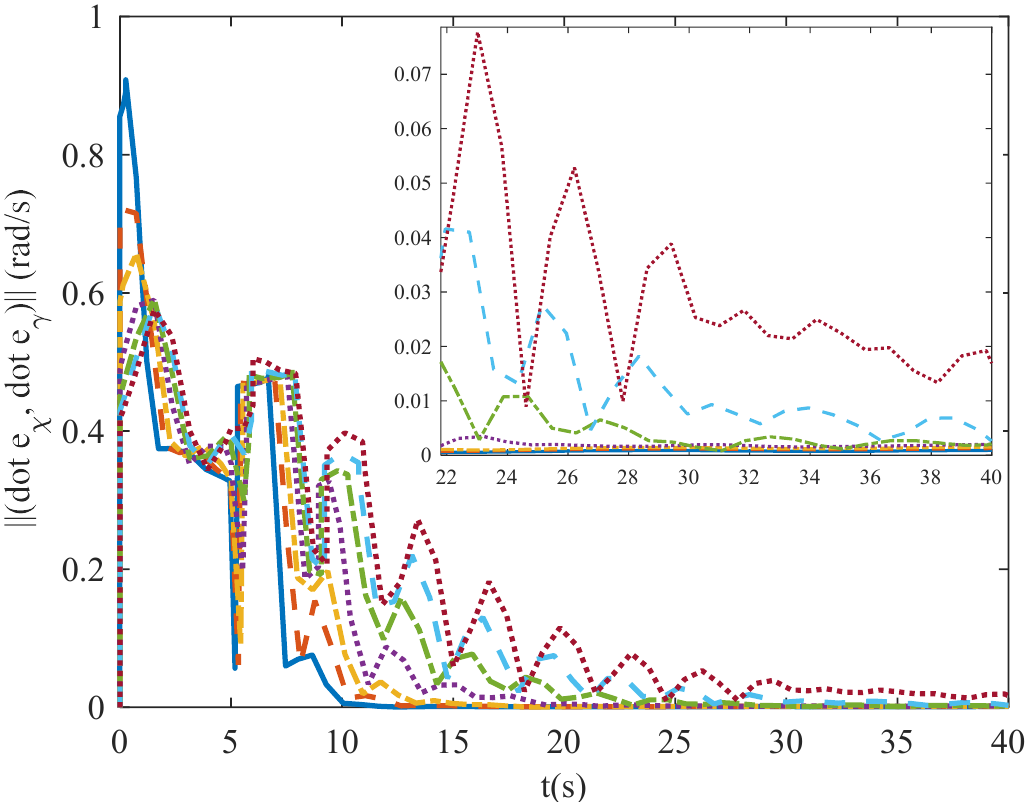}\\
 		(a)  & (b) \\
 	\end{tabular}
 	\caption{ Time curves of the amplitudes of the error $e(t)$ and its derivative $\dot{e}(t)$ at the six different regions in Figure \ref{Fig:1}.
 		(a): Time curves of the amplitudes of the error $e(t)$;
 		(b): Time curves of the amplitudes of the error $\dot{e}(t)$.
 	}
 	\label{Fig:4}
 \end{figure*}

  \begin{figure*}[htbp]
 	\centering
 	\begin{tabular}{cc}
 		\includegraphics[width=0.45\textwidth]{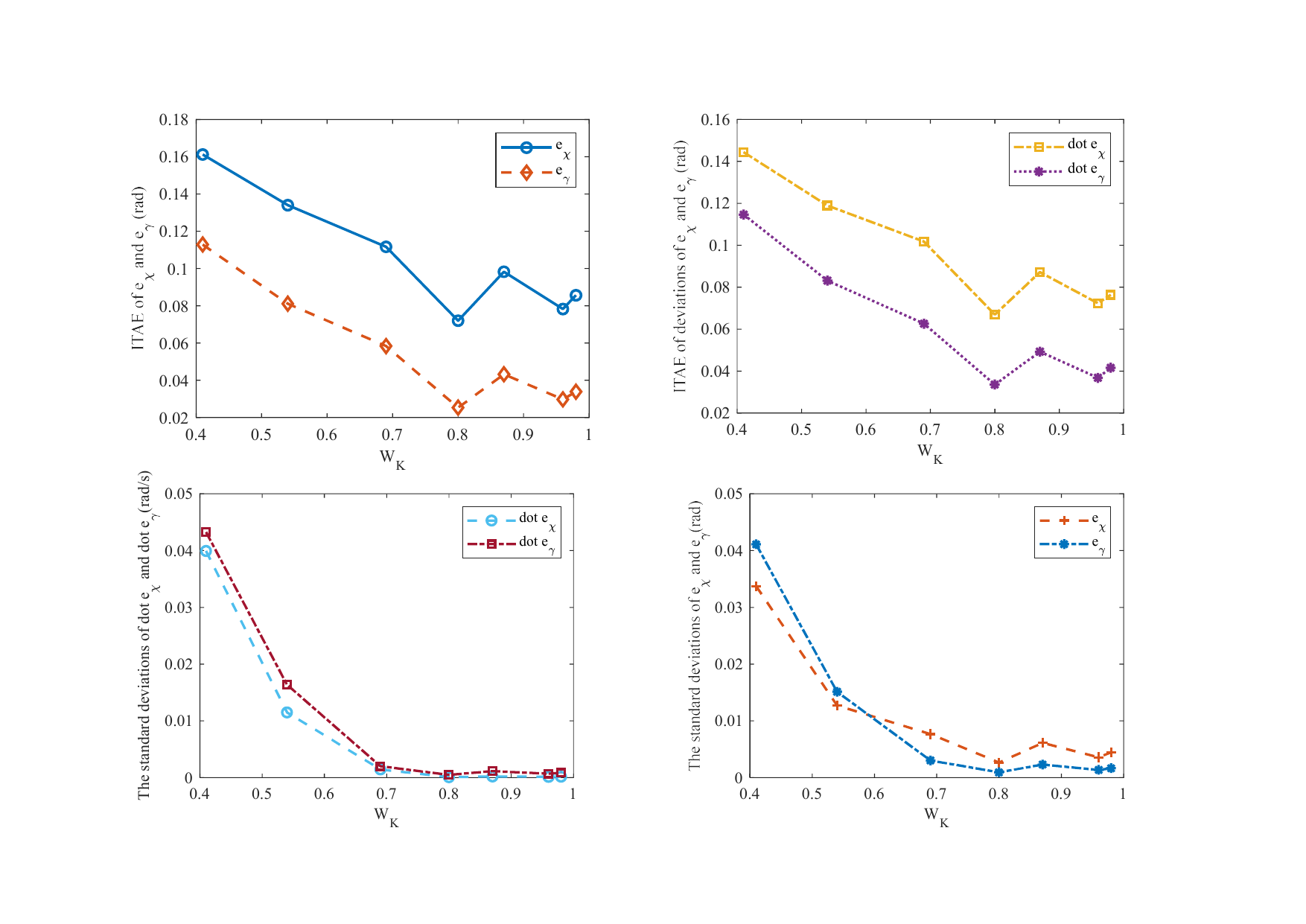} &
 		\includegraphics[width=0.45\textwidth]{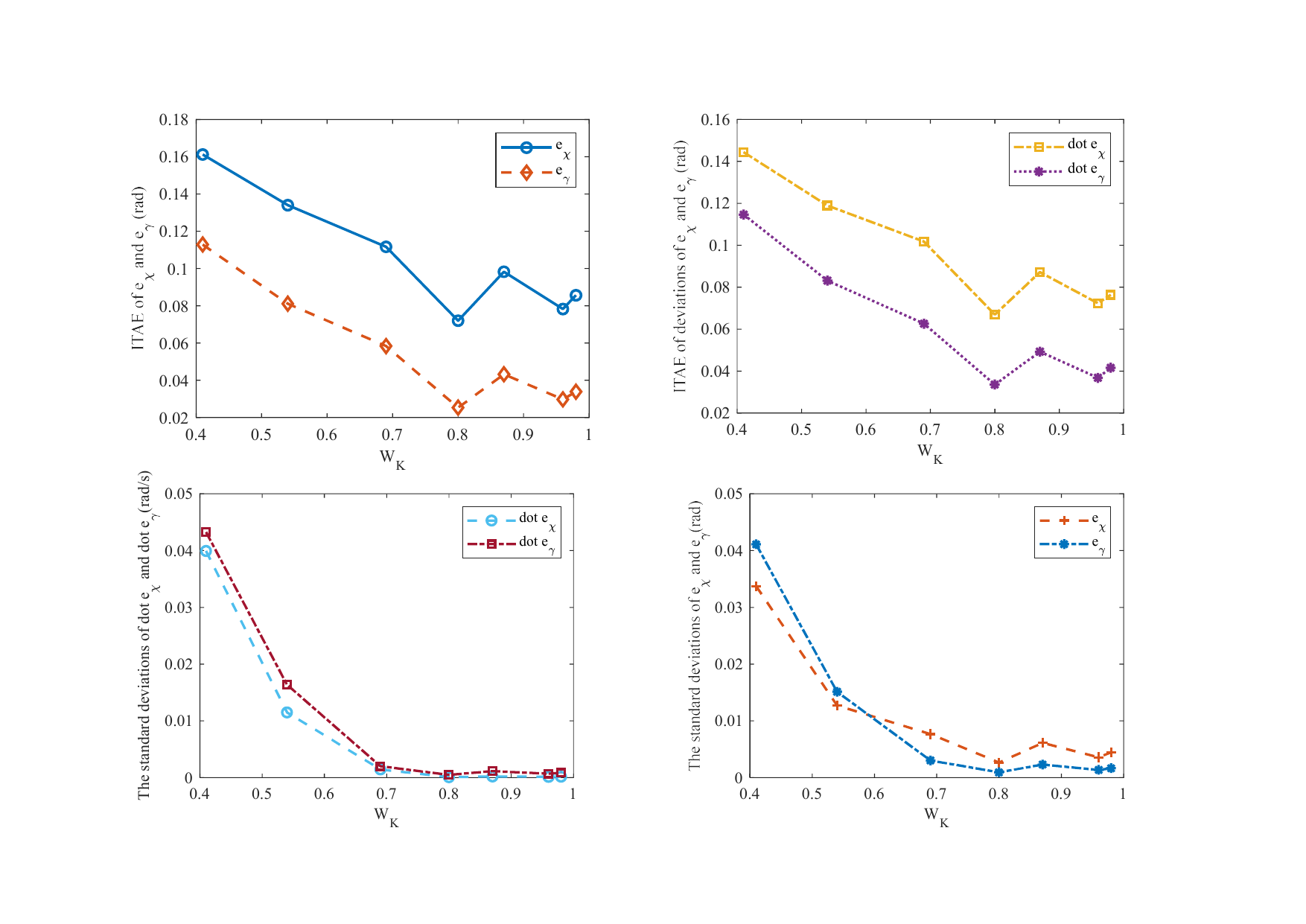}\\
 		(a)  & (b) \\
 		\includegraphics[width=0.45\textwidth]{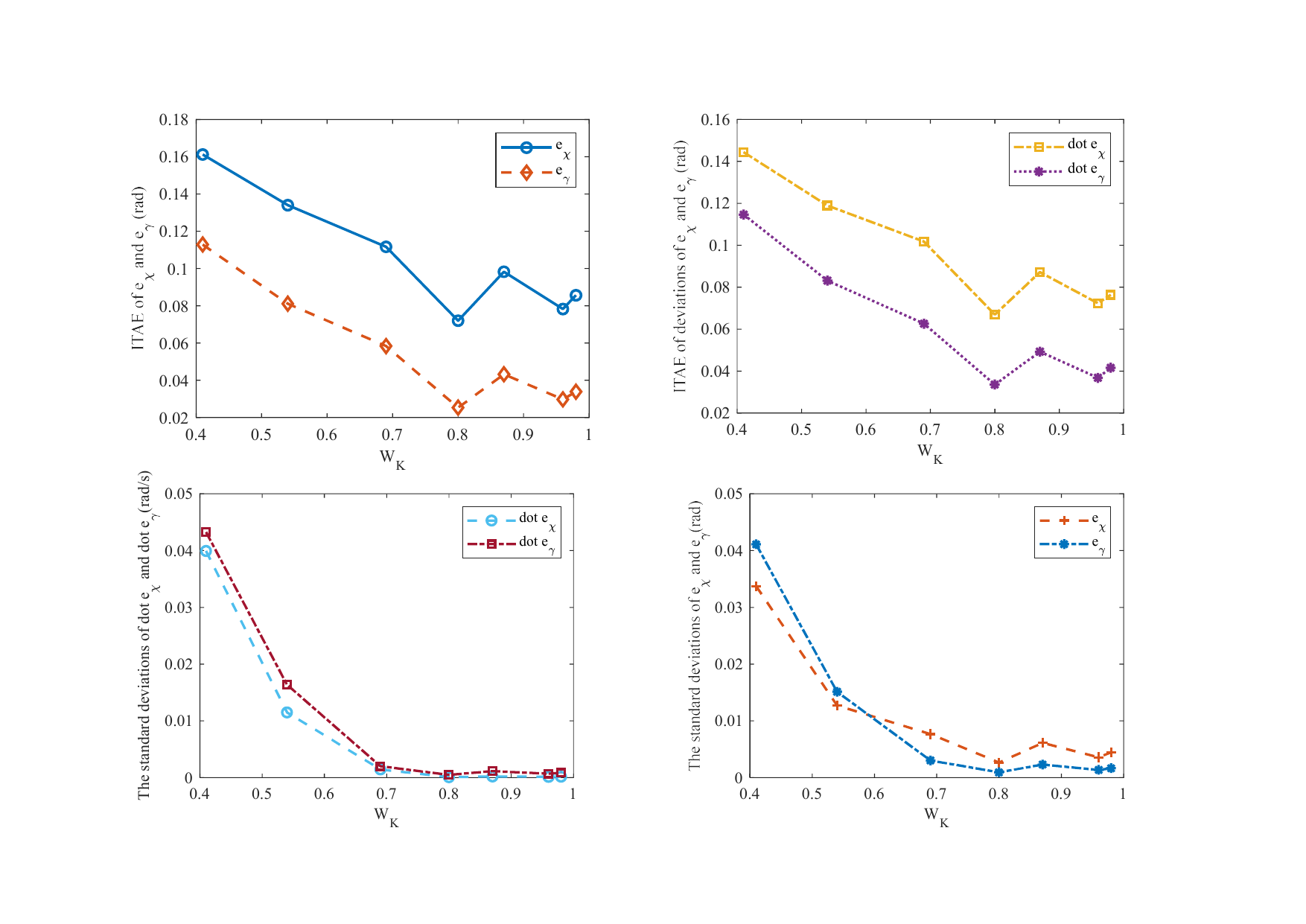} &
 		\includegraphics[width=0.45\textwidth]{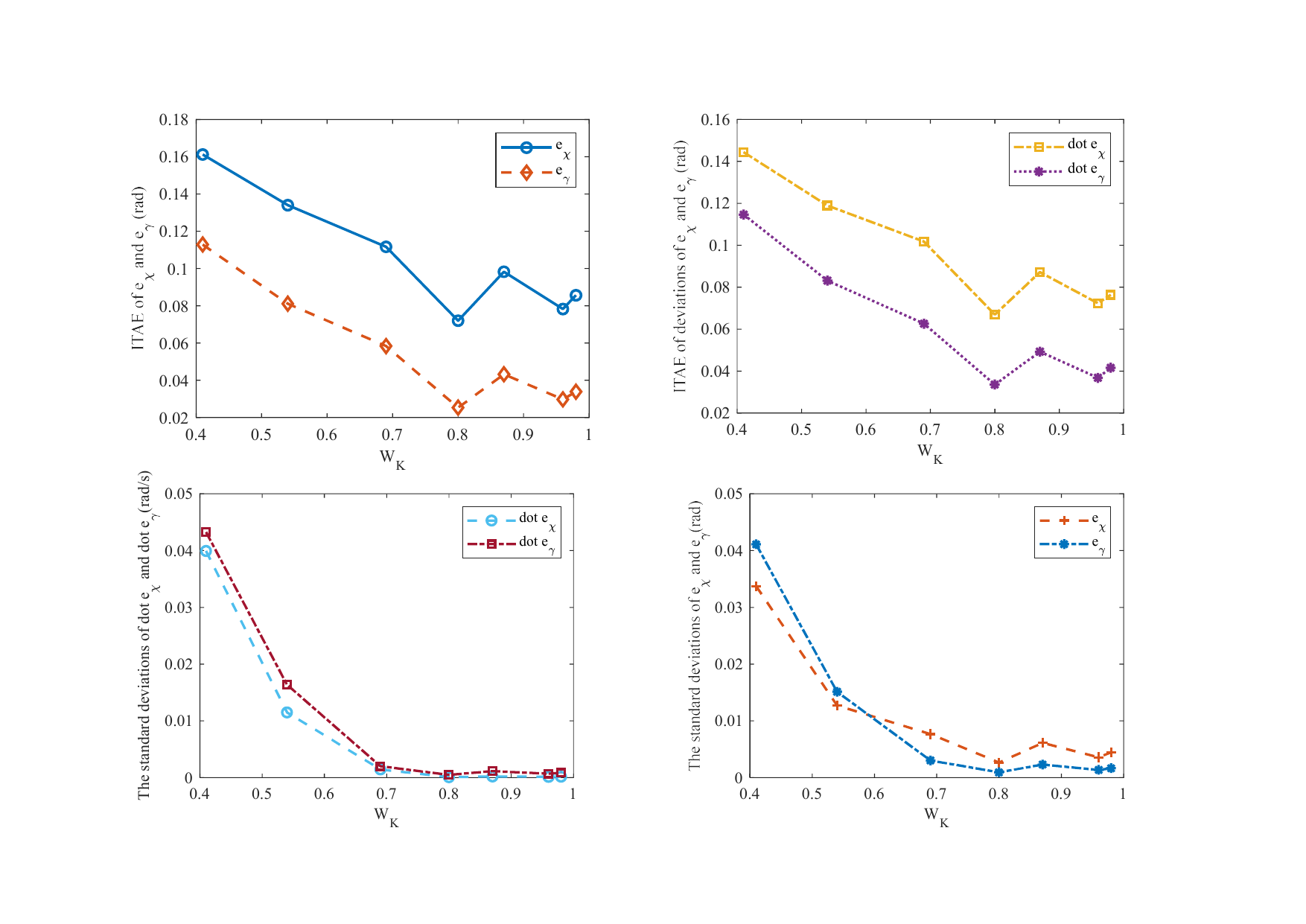}\\
 		(c)  & (d) \\
 	\end{tabular}
 	\caption{ Relationships between indictor $W_K$ (different $K$) and ITAE/Standard deviation of $e_{\chi}, e_{\gamma}, \dot{e}_{\chi}, \dot{e}_{\gamma}$.
 	(a): Relationships between $W_K$ and ITAE of $e_{\chi}$ and $e_{\gamma}$;
 	(b): Relationships between $W_K$ and ITAE of $\dot{e}_{\chi}$ and $\dot{e}_{\gamma}$;
 	(c): Relationships between $W_K$ and standard deviation of $e_{\chi}$ and $e_{\gamma}$;
 	(d): Relationships between $W_K$ and standard deviation of $\dot{e}_{\chi}$ and $\dot{e}_{\gamma}$;
 	}
 	\label{Fig:3}
 \end{figure*}
 
   \begin{figure*}[htbp]
 	\centering
 	\begin{tabular}{cc}
 		\includegraphics[width=0.46\textwidth]{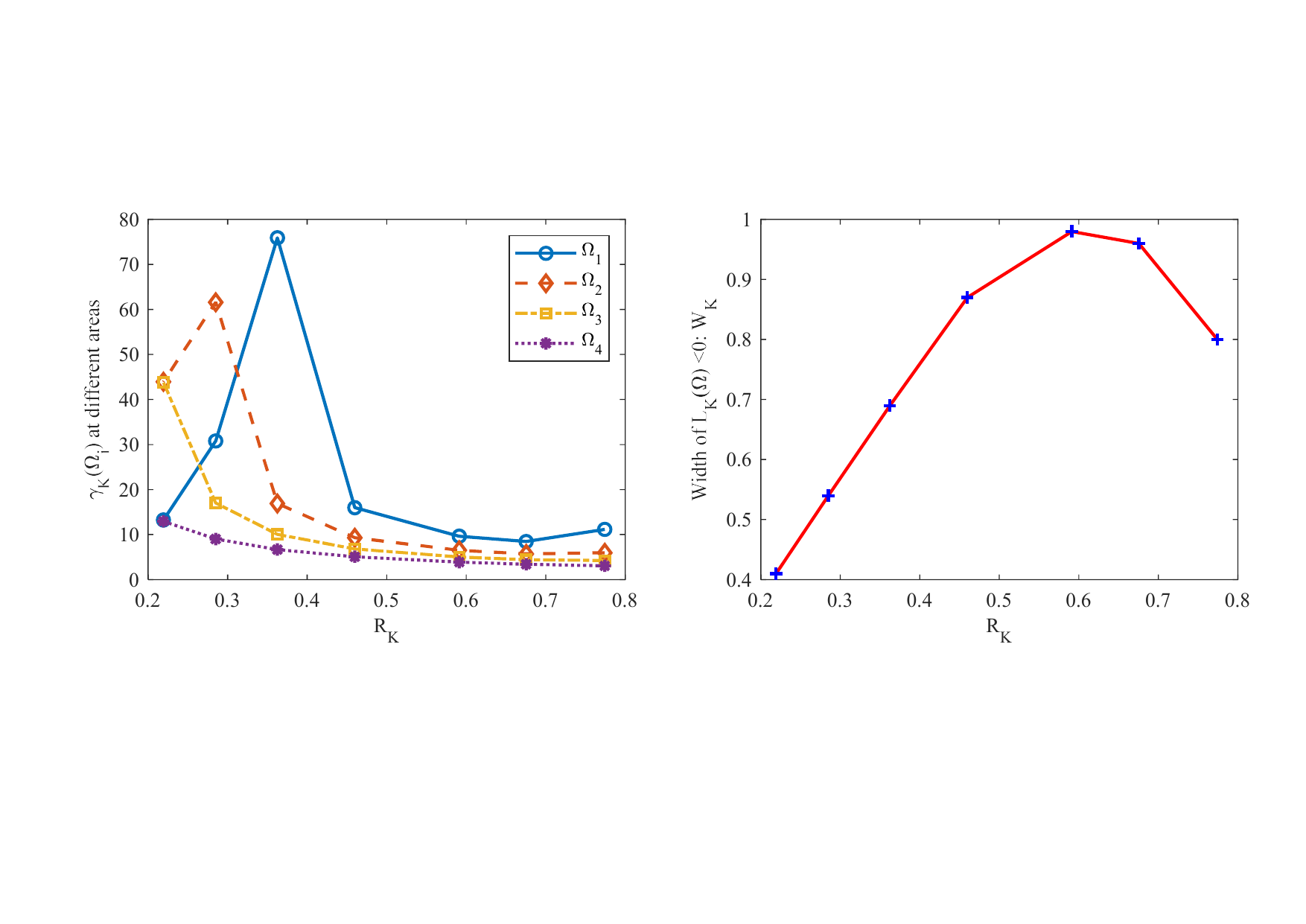} &
 		\includegraphics[width=0.46\textwidth]{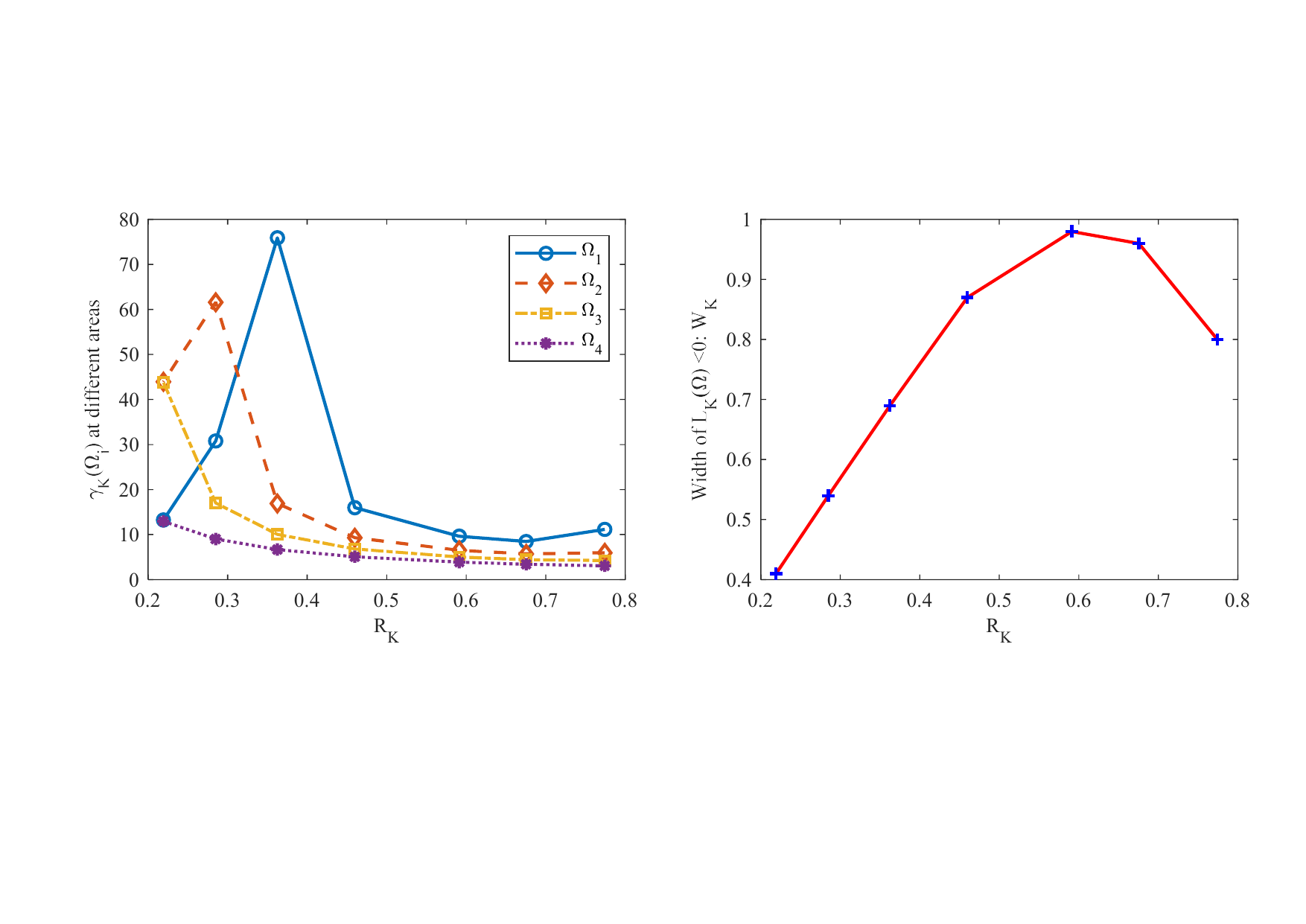}\\
 		(a)  & (b) \\
 	\end{tabular}
 	\caption{ 
 		Curves of the relationships between the robustness index $R_K$ and $\gamma_K(\Omega_i)$ as well as $W_K$ respectively.
 		(a): Relationship between $R_K$ and $\gamma_K(\Omega_i)$, $i=1,2,3,4$;
 		(b): Relationship between $R_K$ and $W_K$.
 	}
 	\label{Fig:5}
 \end{figure*}
  Furthermore, we compare the index $R_K$ proposed in our previous work\cite{SHENG2025108152}—describing the exponential convergence rate of errors under the MIMO-PI controller near the origin—with the indices $\gamma_K(\Omega_i)$ and $W_K$, as shown in Figure \ref{Fig:5}. In subfigure (a), a strong negative correlation between $R_K$ and $\gamma_K(\Omega_i)$ is evident for the origin-adjacent regions $\Omega_3$ and $\Omega_4$, which can be interpreted as stronger $\gamma$-dissipativity near the origin corresponding to a faster exponential convergence rate. However, for the regions $\Omega_1$ and $\Omega_2$ farther from the origin, enhanced dissipativity does not necessarily yield a faster convergence rate, which is mainly attributed to the larger region scope and sustained energy perturbations that may increase system uncertainty. Subfigure (b) further demonstrates the distinctiveness between the proposed index $W_K$ and $R_K$. If we aim to select parameters with both a high exponential convergence rate (larger $R_K$) and a wide $\gamma$-dissipativity domain (larger $W_K$), the peak point in the subfigure undoubtedly corresponds to the optimal parameter $K^*$, as it balances the trade-off between convergence rate and $\gamma$-dissipativity domain. This finding can serve as a key principle for subsequent parameter tuning.

	\section{CONCLUSION}
	To quantify the relationship between the $\mathcal{L}_2$-gain of general perturbed nonlinear systems and corresponding controller parameters, we derive a sufficient criterion for judging whether an origin-containing region satisfies $\gamma$-dissipativity. On this basis, we propose an index to quantify the $\gamma$-dissipativity of the region. With the aid of this index, we can design optimal controller parameters that maximize $\gamma$-dissipativity, i.e., minimize $\mathcal{L}_2$-gain. In the experimental section, we verify the rationality of the proposed index and compare it with our previously developed index characterizing the exponential convergence rate near the origin, which demonstrates their consistency in the origin-adjacent region. However, the proposed index achieves superior quantification performance over a broader range. Notably, in practical deployment, obtaining the Jacobians of the controlled model is essential for optimal parameter design. Future research will focu on Jacobians acquisition under model-free assumptions.

	%%%%%%%%%%%%%%%%%%%%%%%%%%%%%%%%%%%%%%%%%%%%%%%%%%%%%%%
	%%% Acknowledgements. 致谢
	%%%%%%%%%%%%%%%%%%%%%%%%%%%%%%%%%%%%%%%%%%%%%%%%%%%%%%%
%	\newpage
	\Acknowledgements{This work was supported by National Natural Science Foundation of China (Grant No. 51775435).}

	%%%%%%%%%%%%%%%%%%%%%%%%%%%%%%%%%%%%%%%%%%%%%%%%%%%%%%%
	%%% Supplements. 补充材料, 非必选
	%%%%%%%%%%%%%%%%%%%%%%%%%%%%%%%%%%%%%%%%%%%%%%%%%%%%%%%
	\Supplements{Appendix A.}
	
	%%%%%%%%%%%%%%%%%%%%%%%%%%%%%%%%%%%%%%%%%%%%%%%%%%%%%%%
	%%% Reference section. 参考文献
	%%% citation in the content using "some words~\cite{1,2}".
	%%% ~ is needed to make the reference number is on the same line with the word before it.
	%%%%%%%%%%%%%%%%%%%%%%%%%%%%%%%%%%%%%%%%%%%%%%%%%%%%%%%
	%\bibliographystyle{splncs04}
	\bibliographystyle{unsrt}
	\bibliography{autosam}
	
	\vspace{0.0cm}
	%%%%%%%%%%%%%%%%%%%%%%%%%%%%%%%%%%%%%%%%%%%%%%%%%%%%%%%
	%%% Appendix sections. 附录章节, 非必选
	%%%%%%%%%%%%%%%%%%%%%%%%%%%%%%%%%%%%%%%%%%%%%%%%%%%%%%%
	\newpage
	\vspace{-0.1cm}
	\begin{appendix}
		\section{Proof of Lemma \ref{lem:lem1}}
		\label{app:pf_Lemma_1}
		\begin{proof}
			\label{pf:lem1}
			At time \( T \), if the system is \( \gamma \)-dissipativity, then its corresponding positive semi-definite storage function \( V(e) \) for zero-initial state $e(0)=0$ satisfies:  
			\begin{align}
				0\leq V(e(T)) &= V(e(T)) - V(0) \\
				&\leq \int_{0}^{T}\dot{V}(e(\tau)) d\tau \\
				& \leq \int_{0}^{T}\frac{1}{2}\left( \gamma^2 \|\Omega(\tau)\|^2 - \|e(\tau)\|^2 \right) d\tau 
			\end{align}
			Hence, there exists
			\begin{align}
				\sqrt{\frac{\int_{0}^{T}||e(\tau)||^2d\tau  }{\int_{0}^{T}||\Omega(\tau)||^2d\tau } }\leq \gamma
			\end{align}
			The proof is completed.
		\end{proof}
		
		\section{Proof of Lemma \ref{lem:lem3}}
		\label{app:pf_Lemma_3}
		\begin{proof}
			\label{pf:lem3}
			From the \( \gamma \)-dissipativity property of positive semi-definite $V(e(t))$, there exists
			\begin{align}
				\dot{V}(e(t)) \leq \frac{1}{2}\left(\gamma^2||\Omega(t)||^2 - ||e(t)||^2 \right)
			\end{align}
			and
			\begin{align}
				||e(t)||^2\geq \frac{1}{\bar{\lambda}}V(e(t))
			\end{align}
			such that, $\forall t\geq t_0$,
			\begin{align}
				\label{eq:w_t}
				w(t)=\dot{V}(e(t)) + \frac{1}{2\bar{\lambda}} V(e(t)) - \frac{1}{2}\gamma^2||\Omega(t)||^2 \leq 0
			\end{align}
			Hence, solving Eq.(\ref{eq:w_t}) yields:  
			\begin{align}
				\begin{split}
					V(e(t)) &= V(e_0)e^{-\frac{t-t_0}{2\bar{\lambda}}} \\
					&+\int_{t_0}^te^{-\frac{t-\tau}{2\bar{\lambda}}}\left[w(\tau) + \frac{1}{2}\gamma^2||\Omega(\tau)||^2\right]d\tau\\
					&\leq V(e_0)e^{-\frac{t-t_0}{2\bar{\lambda}}} + \frac{\gamma^2}{2}\int_{t_0}^te^{-\frac{t-\tau}{2\bar{\lambda}}} ||\Omega(\tau)||^2d\tau \\
					& = V(e_0)e^{-\frac{t-t_0}{2\bar{\lambda}}} + \frac{\gamma^2}{2}\left(\int_{t_0}^{\frac{t}{2}} + \int_{\frac{t}{2}}^{t}\right)e^{-\frac{t-\tau}{2\bar{\lambda}}} ||\Omega(\tau)||^2d\tau
				\end{split} 
			\end{align}
			There exist $\tau^* \in [t_0,\frac{t}{2}]$ such that
			\begin{align}
				\begin{split}
					0 &\leq \int_{t_0}^{\frac{t}{2}}e^{-\frac{t-\tau}{2\bar{\lambda}}} ||\Omega(\tau)||^2d\tau = e^{-\frac{t-\tau^*}{2\bar{\lambda}}}\int_{t_0}^{\frac{t}{2}} ||\Omega(\tau)||^2d\tau \\
					&\leq C_{\Omega} e^{-\frac{t-\tau^*}{2\bar{\lambda}}}
					\leq C_{\Omega} e^{-\frac{t}{4\bar{\lambda}}}
				\end{split}  
			\end{align}
			Thus, $\lim_{t\rightarrow \infty} \int_{t_0}^{\frac{t}{2}}e^{-\frac{t-\tau}{2\bar{\lambda}}} ||\Omega(\tau)||^2d\tau = 0$. 
			
			In this case, it also holds for $t^*\in[\frac{t}{2},t]$ simultaneously that
			\begin{align}
				\begin{split}
					\int_{\frac{t}{2}}^{t}e^{-\frac{t-\tau}{2\bar{\lambda}}} ||\Omega(\tau)||^2d\tau &=||\Omega(t^*)||^2 \int_{\frac{t}{2}}^{t}e^{-\frac{t-\tau}{2\bar{\lambda}}}d\tau \\	
					&\leq 2\bar{\lambda}||\Omega(t^*)||^2\left(
					1 - e^{-\frac{t}{4\bar{\lambda}}}
					\right)
				\end{split}	 	
			\end{align}
			Due to the fact that $\int_0^{\infty}||\Omega(\tau)||^2 d\tau = C_{\Omega}<\infty$, there exists $||\Omega(t^*)||^2 \rightarrow 0$, as $t^*\rightarrow \infty$. Thus,
			\begin{align}
				\lim_{t\rightarrow \infty}\int_{\frac{t}{2}}^{t}e^{-\frac{t-\tau}{2\bar{\lambda}}} ||\Omega(\tau)||^2d\tau\rightarrow 0
			\end{align}
			Thus,
			\begin{align}
				\lim_{t\rightarrow \infty} V(e(t)) = 0
			\end{align}
			Hence, in the case of $||e(t)||\leq \alpha^{-1}\left( V(e(t)) \right)$, there exists
			\begin{align}
				\lim_{t\rightarrow \infty} ||e(t)|| = 0
			\end{align}
			The proof is completed.
		\end{proof}
		%\section{Proof of Lemma \ref{lem:lemma_appendix2} }
		
	\end{appendix}
	
\end{document}